\title{Weighted Automata and
\\Monadic Second Order Logic}
\author{Nadia Labai 
\institute{Faculty of Computer Science}
\institute{Technion--Israel Institute of Technology}
\email{nadia@cs.technion.ac.il}
\and
Johann A. Makowsky 
\institute{Faculty of Computer Science}
\institute{Technion--Israel Institute of Technology}
\email{\quad janos@cs.technion.ac.il}
}
\begin{document}
\maketitle
\newcommand{\FF}{{\rm FF}}
\newcommand{\cH}{\mathcal{H}}
\newif\ifskip
\skiptrue
\theoremstyle{plain}
\newtheorem{theorem}{Theorem}
\newtheorem{definition}[theorem]{Definition}
\newtheorem{lemma}[theorem]{Lemma}
\newtheorem{proposition}[theorem]{Proposition}
\newtheorem{corollary}[theorem]{Corollary}
\newtheorem{remark}[theorem]{Remark}
\newenvironment{Proof}{{\bf Proof:}\ }{$\Box$\newline}
\newtheorem{problem}{Problem}
\newtheorem{exs}[theorem]{Examples}
\newtheorem{conclusion}[theorem]{Conclusion}
\newtheorem{rem}[theorem]{Remark}
\newcounter{ourown}

\ifskip
\else
\theoremstyle{definition}
\theoremstyle{remark}
\newtheorem{rem}[theorem]{Remark}
\newtheorem{example}[theorem]{Example}
\newtheorem{examples}[theorem]{Examples}
\newtheorem{remarks}[theorem]{Remarks}
\fi 
\newenvironment{renumerate}{\begin{enumerate}}{\end{enumerate}}
\renewcommand{\theenumi}{\roman{enumi}}
\renewcommand{\labelenumi}{(\roman{enumi})}
\renewcommand{\labelenumii}{(\roman{enumi}.\alph{enumii})}

\newcommand{\FORM}{\mathcal{FF}}
\newcommand{\LL}{\langle\langle}
\newcommand{\RR}{\rangle\rangle}
\newcommand{\cL}{\mathcal{L}}
\newcommand{\cF}{{\mathcal{F}}}
\newcommand{\cT}{{\mathcal{T}}}
\newcommand{\cS}{{\mathcal{S}}}
\newcommand{\cM}{{\mathcal{M}}}
\newcommand{\cR}{{\mathcal{R}}}
\newcommand{\cA}{{\mathcal{A}}}
\newcommand{\card}[3]{card_{#1,\bar{#2}}(#3(\bar{#2}))}
\newcommand{\cardm}[3]{card_{#1,#2}(#3(#2))}

\newcommand{\WFF}{\mathbf{WFF}}
\newcommand{\MSOL}{\mathbf{MSOL}}
\newcommand{\SOL}{\mathbf{SOL}}
\newcommand{\CMSOL}{\mathbf{CMSOL}}
\newcommand{\SEN}{\mbox{\bf SEN}}
\newcommand{\WFTF}{\mbox{\bf WFTF}}
\newcommand{\FOL}{\mbox{\bf FOL}}
\newcommand{\TFOF}{\mbox{\bf TFOF}}
\newcommand{\TFOL}{\mbox{\bf TFOL}}
\newcommand{\FOF}{\mbox{\bf FOF}}
\newcommand{\NNF}{\mbox{\bf NNF}}
\newcommand{\N}{{\mathbb N}}
\newcommand{\bN}{{\mathbb N}}
\newcommand{\bR}{{\mathbb R}}
\newcommand{\HF}{\mbox{\bf HF}}
\newcommand{\CNF}{\mbox{\bf CNF}}
\newcommand{\PNF}{\mbox{\bf PNF}}
\newcommand{\QF}{\mbox{\bf QF}}
\newcommand{\DNF}{\mbox{\bf DNF}}
\newcommand{\DISJ}{\mbox{\bf DISJ}}
\newcommand{\CONJ}{\mbox{\bf CONJ}}
\newcommand{\Ass}{\mbox{Ass}}
\newcommand{\Var}{\mbox{Var}}
\newcommand{\Support}{\mbox{Support}}
\newcommand{\V}{\mbox{\bf Var}}
\newcommand{\fA}{{\mathfrak A}}
\newcommand{\fB}{{\mathfrak B}}
\newcommand{\fN}{{\mathfrak N}}
\newcommand{\fZ}{{\mathfrak Z}}
\newcommand{\fQ}{{\mathfrak Q}}
\newcommand{\Aa}{{\mathfrak A}}
\newcommand{\Bb}{{\mathfrak B}}
\newcommand{\Cc}{{\mathfrak C}}
\newcommand{\Gg}{{\mathfrak G}}
\newcommand{\Ww}{{\mathfrak W}}
\newcommand{\Rr}{{\mathfrak R}}
\newcommand{\Nn}{{\mathfrak N}}
\newcommand{\Zz}{{\mathfrak Z}}
\newcommand{\Qq}{{\mathfrak Q}}
\newcommand{\F}{{\mathbf F}}
\newcommand{\T}{{\mathbf T}}
\newcommand{\Z}{{\mathbb Z}}
\newcommand{\R}{{\mathbb R}}
\newcommand{\C}{{\mathbb C}}
\newcommand{\Q}{{\mathbb Q}}
\newcommand{\MT}{\mbox{MT}}
\newcommand{\TT}{\mbox{TT}}
\newcommand{\RMSOL}{\mathbf{RMSOL}}
\newcommand{\WMSOL}{\mathbf{WMSOL}}
\newcommand{\bMSOL}{\mathbf{bMSOL}}
\newcommand{\MSOLEVAL}{\mathbf{MSOLEVAL}}
\newcommand{\SOLEVAL}{\mathbf{SOLEVAL}}
\newcommand{\tw}{\mathrm{tv}}
\newcommand{\TRUE}{\mathrm{TRUE}}
\newcommand{\FALSE}{\mathrm{FALSE}}

\newcommand{\mc}[1]{\mathcal{#1}}

\newif\ifappendix
\appendixfalse

\begin{abstract}
Let $\cS$ be a commutative semiring.
M. Droste and P. Gastin 
have introduced in 2005 weighted  monadic
second order logic $\WMSOL$ with weights in $\cS$.
They use a syntactic fragment $\RMSOL$ of $\WMSOL$
to characterize word functions (power series) recognizable 
by weighted automata, where the semantics of quantifiers is 
used both as arithmetical
operations and, in the boolean case, as quantification.

Already in 2001, B. Courcelle, J.Makowsky and U. Rotics have introduced a
formalism for graph parameters definable in Monadic Second order
Logic, here called $\MSOLEVAL$ with values in a ring $\cR$. 
Their framework can be easily adapted to semirings $\cS$.
This formalism clearly separates the logical part from the
arithmetical part and also applies to word functions.

In this paper we give two proofs that $\RMSOL$ and $\MSOLEVAL$ with values in $\cS$ 
have the same expressive power over words. 
One proof shows directly that $\MSOLEVAL$ captures the functions recognizable by weighted automata.
The other proof shows how to translate the formalisms from one into the other.

\ifskip
\else
In this paper we show, using model theoretic methods, and
Hankel matrices (in the case of
fields) or stable semimodules (in the case of semirings)
that the word functions in $\MSOLEVAL$ 
with values in $\cS$ are exactly the functions
recognizable by weighted automata.
Our formalism can also be used to study recognizable
families of graphs and fits the framework of Courcelle's
general study of monadic second order logic of graphs,
as exemplified in the recent monumental monograph by B.Courcelle and J. Engelfriet.

We also compare the two formalisms directly and show that they
are inter-translatable, hence the characterization of M. Droste and P. Gastin
and ours can be deduced from each other.
However, our model theoretic approach
sheds a different light on recognizability by using model theoretic tools
with a wide range of further applications.
\fi

\end{abstract}
\section{Introduction}

Let $f$ be a function from relational structures of a fixed relational vocabulary $\tau$
into some field, ring, or a commutative semiring $\cS$ which is invariant under $\tau$-isomorphisms. 
$\cS$ is called a {\em weight structure}.
In the case where the structures are graphs, such a function is called a graph parameter,
or, if $\cS$ is a polynomial ring, a graph polynomial.
In the case where the structures are words, it is called a word function. 

The study of definability of graph parameters and graph polynomials
in Monadic Second Order Logic $\MSOL$ was initiated in
\cite{ar:CourcelleMakowskyRoticsDAM} 
and further developed in
\cite{ar:MakowskyTARSKI,ar:KotekMakowskyZilber08}.
For a weight structure $\cS$
we denote the  set of functions of $\tau$-structures definable in $\MSOL$
by $\MSOLEVAL(\tau)_{\cS}$, or if the context is clear, just
by $\MSOLEVAL_{\cS}$. 
The original purpose for studying functions in
$\MSOLEVAL_{\cS}$ was to prove an analogue to Courcelle's celebrated theorem
for polynomial rings as weight structures,
which states that graph parameters $f \in \MSOLEVAL_{\cS}$ are computable in linear
time for graphs of fixed tree-width, \cite{ar:CourcelleMakowskyRoticsDAM}, 
and various generalizations thereof.
$\MSOLEVAL$ can be seen as an analogue of the {\em Skolem elementary functions} aka {\em lower elementary functions}, \cite{ar:Skolem1962,ar:Volkov2010},
adapted to the framework of {\em meta-finite model theory} as defined in \cite{ar:GraedelGurevich}.

\ifskip
\else
Graph parameters in $\MSOLEVAL_{\cS}$  have many other interesting properties,
cf. \cite{phd:Kotek}. In particular, if we deal with sum-like binary operations of labeled
graphs, the corresponding Hankel matrices (aka connection matrices) have finite rank, 
if $\cF$ is a field, \cite{ar:GodlinKotekMakowsky08,pr:KotekMakowsky2012}.
When the context requires $\cS$ to be a field, rather than to be a commutative semiring, we write $\cF$ instead of $\cS$.
\fi 

In \cite{ar:DrosteGastin2007} a different formalism to define $\cS$-valued word  functions was introduced, which
the authors called {\em weighted monadic second order logic $WMSOL$}, and used a fragment, $\RMSOL$, of it to prove 
that a word function is recognized by a weighted automaton iff it is definable in $\RMSOL$.
This can be seen as an analogue of the
B\"uchi-Elgot-Trakhtenbrot Theorem characterizing regular languages for the case of
weighted (aka multiplicity) automata.

\subsection*{Main results}

Our main results explore various features of the two formalisms $\MSOLEVAL$ and $\RMSOL$ for word functions
with values in a semiring $\cS$. In the study of $\MSOLEVAL$ we show how {\em model theoretic tools} can be used
to characterize the word functions  in $\MSOLEVAL$ as the fuctions recognizable by weigthed automata.
This complements the automata theoretic approach used in the study of weighted automata, \cite{bk:HandbookWeightedAutomata,ar:DrosteKuich2013}.
In particular, we give two proofs that $\RMSOL$ and $\MSOLEVAL$ with values in a semiring $\cS$ 
have the same expressive power over words. 
To see this we show the following for a word function $f$ with values in $\cS$:
\begin{renumerate}
\item
If $f$ is definable in $\MSOLEVAL$, it is contained in a finitely generated stable
semimodule of word functions, Theorem \ref{th:stable}.
\item
If $f$ is recognizable by some weighted automaton, it is definable in $\MSOLEVAL$, the ``if'' direction of 
Theorem \ref{th:main-1}.
\item
If $f$ is definable in $\RMSOL$, we can translate it, using Lemma \ref{le:bMSOL},
into an expression in $\MSOLEVAL$,
Theorem \ref{th:transl}.
\item
If $f$ is definable in $\MSOLEVAL$, we can, again using Lemma \ref{le:bMSOL},
translate it into an expression in $\RMSOL$,
Theorem \ref{th:transl-1}.
\end{renumerate}

Items (i) and (ii) together with a classical characterization of recognizable word functions
in terms of finitely generated stable semimodules, Theorem \ref{Jacob},  cf.
\cite{bk:BerstelReutenauer,phd:Jacob,ar:Fliess1974},
give us a direct proof that
$\MSOLEVAL$ captures the functions recognizable by weighted automata.
To prove item (i) we rely on and extend results about $\MSOLEVAL$  from
\cite{ar:MakowskyTARSKI,ar:GodlinKotekMakowsky08,phd:Kotek}.

Items (iii) and (iv) together 
show how to translate the formalisms $\RMSOL$ and $\MSOLEVAL$ into each other.
Lemma \ref{le:bMSOL}
also shows how the fragment $\RMSOL$ of the weighted logic $\WMSOL$ comes into play.

\ifskip
The point of separating (i) and (ii) from (ii) and (iv) and giving {\em two} proofs of
Theorem \ref{th:main-1} is to show that the model theoretic methods developed in the 1950ties
and further developed in \cite{ar:MakowskyTARSKI} suffice to characterize the functions recognized
by weighted automata.
\else
\fi 

\subsection*{Background and outline of the paper}
We assume the reader is familiar with Monadic Second Order Logic and Automata Theory as described in
\cite{bk:EF95,bk:BerstelReutenauer} or similar references.
In Section \ref{se:msoleval} we introduce $\MSOLEVAL$ by example, which suffices for our purposes.
\ifappendix
A full definition is given in Appendix \ref{app:inductive}.
\else
A full definition is given in Appendix \ref{appn:inductive}.
\fi 
In Section \ref{se:aut} we show that the word functions which are recognizable
by a weighted automaton are exactly the word functions definable in $\MSOLEVAL$. 
In Section \ref{se:comp} we give the exact definitions of $\WMSOL$ and $\RMSOL$, and present translations between
$\MSOLEVAL$ and $\RMSOL$ in both directions.
In Section \ref{se:conclu} we draw our conclusions.

\section{Definable word functions}
\label{se:msoleval}
Let $\cS$ be a  commutative semiring. 
We denote  structures over a finite relational signature (aka vocabulary) $\tau$
by $\cA$ and their underlying universe by $A$.
The class  of functions in
$\MSOLEVAL_{\cS}$ 
consists of the functions which map relational structures into $\cS$, and which are
definable in 
Monadic Second Order Logic $\MSOL$.
The functions in
$\MSOLEVAL_{\cS}$  are represented as terms associating with each $\tau$-structure
$\mathcal{A}$ a polynomial $p(\mathcal{A}, \bar{X}) \in \cS[\bar{X}]$.
The class of such polynomials is defined inductively where monomials are products of
constants in $\cS$ and indeterminates in $\bar{X}$ and the product ranges over
elements $a$ of $A$ which satisfy an $\MSOL$-formula $\phi(a)$.
The polynomials are then defined as sums of monomials where the sum ranges over {\em unary}
relations $U \subseteq A$ satisfying an $\MSOL$-formula $\psi(U)$.
The word functions are obtained by substituting elements of $\cS$ for the indeterminates.
The details of the definition of $\MSOLEVAL_{\cS}$ are given 
\ifappendix
in Appendix \ref{app:inductive}, which is taken almost
literally from \cite{ar:KotekMakowskyZilber11} and also treats the more general case of full second order logic $\SOLEVAL$.
Here we explain 
\else
at the end of this section. We first explain
\fi 
the idea of $\MSOLEVAL_{\cS}$ by examples for the case where structures represent words
over a fixed alphabet $\Sigma$.

\ifappendix
\else
\subsection{Guiding examples}
\fi 
Let $f: \Sigma^{\star} \rightarrow \cS$
be an $\cS$-valued function on words over the alphabet $\Sigma$ and let $w$ be a word in $\Sigma^{\star}$.
We call such functions {\em word functions}, following \cite{ar:CarlylePaz1971,ar:Cobham1978}. 
They are also called {\em formal power series} in \cite{bk:BerstelReutenauer},
where the indeterminates are indexed by words and the coefficient of $X_w$ is $f(w)$.

We denote by $w[i]$ the letter at position $i$ in $w$, and 
by $w[U]$
the word induced by $U$,
for $U$ a set of positions in $w$. 
We denote the length of a word $w$ by $\ell(w)$ and the concatenation of two words $u,v \in \Sigma^{\star}$ by $u \circ v$.
We denote by $[n]$ the set $\{1,2, \ldots , n\}$.

We will freely pass between words and structures representing words.
For the sequel, let $\Sigma =\{0,1\}$ and $w \in \{0,1\}^{\star}$ be represented by the structure
$$
\mathcal{A}_w = \langle \{0\} \cup  [\ell(w)], <^w, P_0^w, P_1^w \rangle.
$$
$P_0^w, P_1^w \subseteq [\ell(w)]$ and $P_0^w \cap P_1^w = \emptyset$ and $P_0^w \cup P_1^w = [\ell]$.

As structures are always non-empty, the universe of a word $w$ is represented by
a structure containing the zero position $[n] \cup \{0\}=\{0,1, \ldots, n\}$.
So strictly speaking the size of the structure of the empty word is one,
and of a word of length $n$ it is $n+1$.
The zero position, represented by $0$, has no letter attached to it, and
the elements of the structure different from $0$ represent positions in the word which carry letters. 
The positions in $P_0^w$ carry the letter $0$ and
the positions in $P_1^w$ carry the letter $1$.

\begin{exs}
\label{ex:1}
In the following examples the functions are word functions
with values in the ring $\Z$ or the polynomial ring $\Z[X]$.
\begin{renumerate}
\item 
The function $\sharp_1(w)$ counts the number of occurrences of $1$ in a word $w$
and can be written as 
$$\sharp_1(w) = \sum_{i \in [n]: P_1(i)} 1.$$
\item 
The polynomial $X^{\sharp_1(w)}$ 
can be written as 
$$X^{\sharp_1(w)} = \prod_{i \in [n]: P_1(i)} X.$$
\item 
Let $L$ be a regular language defined by the $\MSOL$-formula $\phi_L$.
The generating function of the number of (contiguous) occurrences of words $u \in L$ in a word $w$,
can be written as 
$$\sharp_L(w) = \sum_{U \subseteq [n]: w[U] \models \psi_L} \prod_{i \in U} X,$$
where $\psi_L(U)$ says that $U$ is an interval and  $\phi_L^U$, the relativization of $\phi_L$ to $U$,
holds.
\item 
The functions 
$\mathrm{sq}(w) = 2^{\ell(w)^2}$ and
$\mathrm{dexp}(w) = 2^{2^{\ell(w)}}$
are not representable in $\MSOLEVAL_{\mathcal{F}}$.
\end{renumerate}
\end{exs}

The {\em tropical semiring} $\cT_{min}$ is the semiring with universe 
$\R \cup \{\infty\}$,
consisting of the real numbers augmented by an
additional element $\infty$, 
and $min$ as addition with $\infty$ as neutral element and real addition $+$ as multiplication with
$0$ as neutral element.
The tropical semiring $\cT_{max}$,
also sometimes called {\em arctic semiring}, 
is defined analogously, 
where $\infty$ is replaced by $-\infty$
and $min$ by $max$.
The choice of the 
commutative semiring $\cS$ makes quite a difference as illustrated by the following:
\begin{exs}
\label{ex:1a}
In the next examples the word functions take values in the ring $\Z$ with addition and multiplication,
or in the subsemiring of $\cT_{max}$ generated by $\Z$.
A {\em block of $1$'s in a word $w \in \{0,1\}^{\star}$} is a maximal 
set of consecutive positions $i \in [\ell(w)]$ in the
word $w$ with $P_1(i)$.
\begin{renumerate}
\item
The function $b_1(w)$ counts the number of blocks of $1$'s in $w$.
$b_1(w)$  can be written as
$$
b_1(w)= \sum_{B \subseteq [\ell(w)] : B \mbox{ is a block of 1's}} 1
$$
which is in $\MSOLEVAL_{\Z}$.
Alternatively, it can be written as
\begin{gather}
b_1(w)= \sum_{v \in [\ell(w)] : First-in-Block(v)} 1,
\label{eq:block}
\end{gather}
where $First-in-Block(v)$ is the formula in $\MSOL$ which says that $v$ is a first position in a block of $1$'s.
Equation (\ref{eq:block}) can be expressed 
in $\MSOLEVAL_{\Z}$ and also in both
$\MSOLEVAL_{\cT_{min}}$ and 
$\MSOLEVAL_{\cT_{max}}$.
\item
Let $mb_1^{max}(w)$ be the function which assigns to the word $w$ 
the maximum of the  sizes of 
blocks of $1$'s,
and
$mb_1^{min}(w)$ be the function which assigns to the word $w$ 
the minimum of the  sizes of 
blocks of $1$'s.
One can show, see Remark \ref{rem:hankel}, that
$mb_1^{max}$ 
and
$mb_1^{min}$ 
are not
definable over the ring $\Z$.
However, they are definable over $\cT_{max}$, respectively over $\cT_{min}$, by writing
$$
mb_1^{max} = \max_{B: B \mbox{ is a block of 1's }} \sum_{v: v \in B} 1
$$
and
$$
mb_1^{min} = \min_{B: B \mbox{ is a block of 1's }} \sum_{v: v \in B} 1
$$
\item
The function $b_1(w)^2$ is definable in $\MSOLEVAL_{\Z}$  because $\MSOLEVAL_{\Z}$ is closed under the usual product,
\ifappendix
cf. Appendix \ref{app:inductive}.
\else
cf. Proposition \ref{pro:products-1}.
\fi 
However, it is not definable over either of the two tropical semirings.
To see this one notes that polynomials in a tropical semiring are piecewise linear.
\end{renumerate}
\end{exs}

\begin{remark}
\label{rem:hankel}
Let $f$ be a word function which takes values in a field $\cF$.
The Hankel matrix $\cH(f)$ is the infinite matrix where rows and columns are labeled by
words $u,v$ and the entry $\cH(f)_{u,v} = f(u \circ v)$.
It is shown in \cite{ar:GodlinKotekMakowsky08} that for word functions $f$ in $\MSOLEVAL_{\cF}$ the Hankel matrix $\cH(f)$ has finite rank.
To show non-definability of $f$ it suffices to show that $\cH(f)$ has infinite rank over a field $\cF$ extending $\Z$.
\end{remark}

\ifappendix
\else
\subsection{Formal definition of $\MSOLEVAL$}
\label{appn:inductive}
\label{se:msol}
Let $\cS$ be a commutative semiring, which contains the semiring
of natural numbers $\N$.
We first define $\MSOL$-polynomials, which are multivariate polynomials.
The functions in $\MSOLEVAL$ are obtained from $\MSOL$-polynomials by substituting
values from $\cS$ for the indeterminates.

$\MSOL$-polynomials have a fixed finite set of variables (indeterminates,
if we distinguish them from the variables of $\SOL$), $\mathbf{X}$.
We denote by $\cardm{M}{v}{\varphi}$ the number of elements $v$ 
in the universe that satisfy $\varphi$. 
We 
assume $\tau$ contains a relation symbol $\mathbf{R}_\leq$ 
which is always interpreted as a linear ordering of the universe. 

Let $\mathfrak{M}$ be a $\tau$-structure.
We first define the
{\em $\MSOL(\tau)$-monomials}
inductively.
\begin{definition}[$\MSOL$-monomials]
\ 
\label{def:monomials}
\begin{renumerate}
\item
Let $\phi(v)$ be a formula in $\MSOL(\tau)$,
where $v$ is a
first order variable.
Let $r\in \mathbf{X}\cup\left(\cS-\{0\}\right)$ be either an indeterminate or an integer. 
Then
\[
  r^{\cardm{M}{v}{\phi}}
\]
is a
standard $\MSOL(\tau)$-monomial (whose value depends on $\cardm{M}{v}{\phi}$.
\item
Finite products of $\MSOL(\tau)$-monomials are
$\MSOL(\tau)$-monomials. 
\end{renumerate}
Even if $r$ is an integer, and $r^{\cardm{M}{v}{\phi}}$ does not depend on $\mathfrak{M}$,
the monomial stands as it is, and is not evaluated.
\end{definition}

Note the degree of a monomial is polynomially
bounded by the cardinality of~$\mathfrak{M}$.

\begin{definition}[$\MSOL$-polynomials]
\label{def:polynomials}
The polynomials definable in $\MSOL(\tau)$ are defined inductively: 
\begin{renumerate}
\item
$\MSOL(\tau)$-monomials are
$\MSOL(\tau)$-polynomials. 
\item
Let $\phi$ be a $\tau \cup \{\bar{\mathbf{R}}\}$-formula in $\MSOL$
where 
$\bar{\mathbf{R}} = (\mathbf{R}_1, \ldots , \mathbf{R}_m)$ 
is a finite sequence of {\em unary} relation 
symbols not in $\tau$.
Let $t$ be a
$\MSOL(\tau\cup\{\bar{\mathbf{R}}\})$-polynomial.
Then
$$
\sum_{\bar{R}: \langle \mathfrak{M},\bar{R} \rangle \models \phi(\bar{R})} 
t
$$
is a
$\MSOL(\tau)$-polynomial.
\end{renumerate}
\end{definition}
For simplicity we refer to 
$\MSOL(\tau)$-polynomials as  $\MSOL$-polynomials
when $\tau$ is clear from the context.

We shall use the following properties of $\MSOL$-polynomials.
The proofs  can be found in \cite{ar:KotekMakowskyZilber11}.
\begin{lemma}~
\label{le:constants}
\begin{renumerate}
 \item  Every indeterminate $x\in\mathbf{X}$ can be written as an $\MSOL$-monomial.
 \item Every integer $c$ can be written as an
$\MSOL$-monomial.
 \end{renumerate}
\end{lemma}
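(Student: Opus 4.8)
The plan is to prove both items uniformly, by exhibiting a single first-order formula that has exactly one witness in every $\tau$-structure. Observe that, by Definition \ref{def:monomials}(i), for any admissible base $r\in\mathbf{X}\cup(\cS\setminus\{0\})$ and any $\MSOL(\tau)$-formula $\delta(v)$ with one free first-order variable, the expression $r^{\cardm{M}{v}{\delta}}$ is a standard $\MSOL(\tau)$-monomial, and its value in a structure $\mathfrak{M}$ is $r$ raised to the number of elements of the universe satisfying $\delta$. Hence it suffices to find one formula $\delta(v)$ with $\cardm{M}{v}{\delta}=1$ in \emph{every} $\tau$-structure: then $r^{\cardm{M}{v}{\delta}}$ has value $r$ in every $\mathfrak{M}$, and we get item (i) by taking $r=x$ and item (ii) by taking $r=c$.

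For $\delta$ I would exploit the built-in linear order. Recall that $\tau$ is assumed to contain the symbol $\mathbf{R}_\leq$, which is always interpreted as a linear ordering of the (finite, non-empty) universe. Let $\delta(v)$ be the $\MSOL(\tau)$-formula $\forall u\,\mathbf{R}_\leq(v,u)$, expressing that $v$ is the $\mathbf{R}_\leq$-least element (in the word case one could equivalently single out the zero position). Since a non-empty finite linear order has exactly one least element, $\delta$ is satisfied by precisely one element in every $\tau$-structure, so $\cardm{M}{v}{\delta}=1$ identically. Combining this with the previous paragraph yields $x = x^{\cardm{M}{v}{\delta}}$ and $c = c^{\cardm{M}{v}{\delta}}$ as $\MSOL(\tau)$-monomials, as required.

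There is no real obstacle here; the only things to verify are bookkeeping points, and the one conceptual remark worth making. On the bookkeeping side, one checks that $\delta$ is a legitimate $\MSOL(\tau)$-formula and that $x\in\mathbf{X}$, resp.\ $c\in\N\setminus\{0\}\subseteq\cS\setminus\{0\}$, are legitimate bases in the sense of Definition \ref{def:monomials}(i); both are immediate. On the conceptual side, the presence of $\mathbf{R}_\leq$ is exactly what makes the argument uniform: without a distinguished order (or a constant) no first-order formula in one variable can have a fixed finite nonzero number of witnesses across all $\tau$-structures, so the linear order is precisely what permits a single choice of $\delta$ that works everywhere. Finally, the degenerate value $c=0$ is not realizable as a standard monomial (the base $0$ is excluded); it is instead obtained one level up, as an empty sum in Definition \ref{def:polynomials}(ii), so the statement is to be read for $c\neq 0$.
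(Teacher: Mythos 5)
Your proof is correct and is essentially the argument the paper delegates to its reference: the standing assumption that $\tau$ contains $\mathbf{R}_\leq$, always interpreted as a linear order on the (finite, non-empty) universe, is there precisely so that a formula such as $\delta(v)=\forall u\,\mathbf{R}_\leq(v,u)$ has exactly one witness in every structure, whence the standard monomial $r^{\cardm{M}{v}{\delta}}$ of Definition~\ref{def:monomials}(i) evaluates to $r$ for any admissible base $r\in\mathbf{X}\cup(\cS\setminus\{0\})$. Your closing caveats --- that $c=0$ is excluded as a monomial base and is instead obtained as an empty sum at the polynomial level, and that some distinguished element (least element or the zero position) is needed to get a uniformly unique witness --- are consistent with the paper's conventions.
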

\begin{proposition}
\label{pro:products-1}
The pointwise product of two
$\MSOL$-polynomials
is again an
$\MSOL$-polynomial.
\end{proposition}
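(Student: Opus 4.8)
The plan is to argue by induction on the total number of summation operators $\sum_{\bar R:\langle\mathfrak M,\bar R\rangle\models\phi}$ occurring in the two given $\MSOL$-polynomials $p$ and $q$. In the base case this number is $0$, so both $p$ and $q$ are $\MSOL$-monomials; by Definition \ref{def:monomials}(ii) a finite product of monomials is again a monomial, hence $p\cdot q$ is a monomial and in particular an $\MSOL$-polynomial by Definition \ref{def:polynomials}(i).

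For the inductive step I would assume that at least one of the two, say $p$, has a summation at the top, i.e.\ $p=\sum_{\bar R:\langle\mathfrak M,\bar R\rangle\models\phi(\bar R)} t$ where $t$ is an $\MSOL(\tau\cup\{\bar R\})$-polynomial with strictly fewer summations than $p$. Since $q$ mentions only finitely many relation symbols, I would first rename the bound tuple $\bar R$ so that none of its symbols occurs in $q$, and then view $q$ as an $\MSOL(\tau\cup\{\bar R\})$-polynomial that simply does not use $\bar R$. (Here I use the trivial monotonicity fact that every $\MSOL(\tau)$-polynomial is also an $\MSOL(\tau')$-polynomial whenever $\tau\subseteq\tau'$, which is immediate from the inductive definition.) Then the value of $q$ on $\langle\mathfrak M,\bar R\rangle$ does not depend on $\bar R$, and the distributive law of the semiring $\cS$ gives, for every $\tau$-structure $\mathfrak M$,
\[
p(\mathfrak M)\cdot q(\mathfrak M)
\;=\;\Big(\sum_{\bar R:\langle\mathfrak M,\bar R\rangle\models\phi(\bar R)} t(\langle\mathfrak M,\bar R\rangle)\Big)\cdot q(\mathfrak M)
\;=\;\sum_{\bar R:\langle\mathfrak M,\bar R\rangle\models\phi(\bar R)} \big(t\cdot q\big)(\langle\mathfrak M,\bar R\rangle).
\]
By the induction hypothesis applied to the pair $t,q$ (whose total number of summations is one less than that of $p,q$), $t\cdot q$ is an $\MSOL(\tau\cup\{\bar R\})$-polynomial, and hence $\sum_{\bar R:\langle\mathfrak M,\bar R\rangle\models\phi(\bar R)}(t\cdot q)$ is an $\MSOL(\tau)$-polynomial by Definition \ref{def:polynomials}(ii); this is exactly $p\cdot q$. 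When both $p$ and $q$ begin with a summation one may instead rename so that the two bound tuples $\bar R,\bar S$ are disjoint and use $(\sum_i a_i)(\sum_j b_j)=\sum_{i,j}a_ib_j$ to pull both sums out at once; either way the induction measure strictly decreases.

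The only genuinely delicate points are bookkeeping ones: the capture-avoiding renaming of the bound second-order variables, and the observation that the classes of $\MSOL$-monomials and $\MSOL$-polynomials are monotone under expansion of the vocabulary, so that the induction hypothesis legitimately applies over the enlarged signature $\tau\cup\{\bar R\}$. Nothing beyond the semiring axioms is used (commutativity of $\cS$, assumed throughout the paper, is not even needed for this statement), so the argument works uniformly for all commutative semirings, and I do not expect any real obstacle once the induction measure and the renaming are made precise.
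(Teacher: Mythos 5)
Your argument is correct: the base case rests on Definition \ref{def:monomials}(ii) (finite products of monomials are monomials), and the inductive step is the standard distributivity-plus-renaming argument, with the two bookkeeping points (capture-avoiding renaming of the bound unary relation symbols and monotonicity of the polynomial classes under vocabulary expansion) properly identified. The paper itself does not spell out a proof but defers to the cited reference, where the argument is essentially the one you give, so there is nothing to flag.
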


\fi 
\section{$\MSOLEVAL_{\cS}$ and Weighted Automata}
\label{se:aut}

Let $\cS$ be a commutative semiring and $\Sigma$ a finite alphabet.
A weighted automaton $A$ of size $r$ over $\cS$ is given by:
\begin{renumerate}
\item
Two vectors $\alpha, \gamma \in \cS^r$, and
\item
for each $\sigma \in \Sigma$ a matrix $\mu_{\sigma} \in \cS^{r \times r}$.
\end{renumerate}
For a matrix or vector $M$ we 
denote by $M^T$ the transpose of $M$.

For a word $w = \sigma_1 \sigma_2 \ldots \sigma_{\ell(w)}$
the automaton $A$ defines the function 
\begin{gather}
f_A(w) = \alpha \cdot \mu_{\sigma_1} \cdot \ldots \cdot \mu_{\sigma_{\ell(w)}} \cdot \gamma^{T}.
\notag
\end{gather}
A word function $f: \Sigma^{\star} \rightarrow \cS$ 
is recognized by an automaton $A$ if 
$f=f_A$.
$f$ is recognizable if there exists a weighted automaton $A$ which recognizes it.

\begin{theorem}
\label{th:2nd-a}
\label{th:main-1}
Let $f$ be a word function with values in a commutative semiring $\mathcal{S}$. 
Then 
$f \in \MSOLEVAL_{\mathcal{S}}$ iff 
f is recognized by some weigthed automaton $A$ over $\mathcal{S}$.
\end{theorem}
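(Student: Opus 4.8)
The plan is to prove the two directions separately, and to use the classical characterization of recognizable word functions via finitely generated stable semimodules (Theorem \ref{Jacob}) as the pivot on the algebraic side, together with the two model-theoretic ingredients announced in the introduction.

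\emph{From $\MSOLEVAL$ to weighted automata.} Suppose $f \in \MSOLEVAL_{\cS}$. By Theorem \ref{th:stable}, $f$ lies in a finitely generated stable semimodule of word functions; concretely there are word functions $f = f_1, f_2, \ldots, f_k$ spanning (as an $\cS$-semimodule, closed under the left and right shift operators $u \mapsto (w \mapsto f(u \circ w))$ and $v \mapsto (w \mapsto f(w \circ v))$) a finitely generated stable module. One then reads off a weighted automaton directly from this data in the standard way: take the state space to be (a generating set indexed by) $[k]$; for each letter $\sigma \in \Sigma$ let $\mu_\sigma$ be the matrix expressing the shift-by-$\sigma$ of each generator as an $\cS$-combination of the generators, which exists precisely by stability and finite generation; take $\gamma$ to record the values $f_i(\varepsilon)$ and $\alpha$ the coordinates of $f$ itself in terms of the generators. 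An easy induction on $\ell(w)$ then shows $f_A(w) = f(w)$. The only care needed here is that we are over a semiring, not a ring, so we must invoke the semiring version of the classical theorem (the references \cite{bk:BerstelReutenauer,phd:Jacob,ar:Fliess1974} cover exactly this) rather than the Hankel-rank argument valid only over fields.

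\emph{From weighted automata to $\MSOLEVAL$.} Conversely, suppose $f = f_A$ for a weighted automaton $A = (\alpha, (\mu_\sigma)_{\sigma\in\Sigma}, \gamma)$ of size $r$. Then $f(w) = \alpha \cdot \mu_{\sigma_1} \cdots \mu_{\sigma_{\ell(w)}} \cdot \gamma^T = \sum_{\text{runs } \rho} (\text{weight of } \rho)$, where a run $\rho$ is a sequence of states $q_0, q_1, \ldots, q_{\ell(w)} \in [r]$ and its weight is $\alpha_{q_0} \cdot \gamma_{q_{\ell(w)}} \cdot \prod_{i=1}^{\ell(w)} (\mu_{\sigma_i})_{q_{i-1}, q_i}$. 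The point is to encode a run as a tuple of unary relations $\bar{R} = (R_1, \ldots, R_r)$ on the positions of $\cA_w$, where $R_q$ is the set of positions $i$ with $q_i = q$ (so the $R_q$ partition $[\ell(w)] \cup \{0\}$, with the zero position carrying the initial state). This ``$\bar R$ codes a run'' condition is plainly $\MSOL$-definable. The run-weight is a product over positions of finitely many semiring constants (the finitely many entries of $\alpha$, $\gamma$, and the $\mu_\sigma$), where which constant is multiplied at position $i$ depends only on the letter $w[i]$, on $R_{q}(i-1)$, $R_{q'}(i)$ — i.e.\ on $\MSOL$-definable properties of $i$ using $\bar R$ and the successor relation (which is $\MSOL$-definable from $<^w$). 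Using Lemma \ref{le:constants}, each semiring constant is an $\MSOL$-monomial, so the run-weight is an $\MSOL$-monomial over $\tau \cup \{\bar R\}$, and $f(w) = \sum_{\bar R : \langle \cA_w, \bar R\rangle \models \phi(\bar R)} (\text{run-weight})$ is exactly an $\MSOLEVAL$-polynomial in the sense of Definition \ref{def:polynomials} (with no surviving indeterminates). Hence $f \in \MSOLEVAL_{\cS}$.

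\emph{Main obstacle.} The forward direction (definable $\Rightarrow$ recognizable) is where the real work sits, and it has been isolated as Theorem \ref{th:stable}: showing that an arbitrary $\MSOLEVAL$-polynomial, built from nested sums over unary relations and products over elements, generates a finitely generated stable semimodule. Over a field one would bound the Hankel rank (Remark \ref{rem:hankel}); over a general commutative semiring one must instead track an explicit finite generating set through the inductive construction of $\MSOL$-polynomials, using a Feferman--Vaught / Ehrenfeucht--Fra\"iss\'e type decomposition of $\MSOL$-formulas under word concatenation to see that the shift of a definable function stays within a fixed finitely generated semimodule. That decomposition, and the bookkeeping of how quantifier rank controls the number of generators, is the technical heart; the automaton-to-logic direction sketched above is, by contrast, essentially a direct encoding.
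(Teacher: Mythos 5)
Your proposal is correct and follows essentially the same route as the paper: the ``only if'' direction via the Stable Semimodule Theorem (Theorem \ref{th:stable}) combined with Jacob's characterization (Theorem \ref{Jacob}), and the ``if'' direction by expanding $f_A(w)$ as a sum over runs, encoding a run as an $\MSOL$-definable partition of the positions into $r$ unary relations, and expressing the run-weight as a product of semiring constants selected by $\MSOL$-formulas (the content of Lemmas \ref{le:a}, \ref{le:b} and \ref{le:c}, together with Lemma \ref{le:constants} and Proposition \ref{pro:products-1}). The only cosmetic difference is that you inline the standard construction of the automaton from a finitely generated stable semimodule rather than citing Jacob's theorem as a black box, and you correctly identify Theorem \ref{th:stable} (via the bilinear decomposition of Theorem \ref{th:bdt}) as the technical heart of the forward direction.
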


In this section we prove Theorem \ref{th:main-1} using model theoretic tools, without going through weighted logic.
We need a few definitions.

The {\em quantifier rank $qr(f)$} of a word function $f$ in $\MSOLEVAL_{\cS}$ is defined as the maximal
quantifier rank of the formulas which appear in the definition of $f$. It somehow measures the complexity
of $f$, but we do not need the technical details in this paper. 
Quantifier ranks of formulas in $\MSOL$ are defines as usual, cf. \cite{bk:EF95}.

We denote by  
$\cS^{\Sigma^{\star}}$
the set of word functions $\Sigma^{\star} \rightarrow \cS$.
A {\em semimodule $\cM$} is a subset of
$\cS^{\Sigma^{\star}}$
closed under point-wise addition of word functions in $\cM$, and point-wise multiplication
with elements of $\cS$. 
Note that
$\cS^{\Sigma^{\star}}$ itself is a semimodule.

{\em $M \subseteq \cS^{\Sigma^{\star}}$ is finitely generated} if there is a finite set $F \subseteq \cS^{\Sigma^{\star}}$ 
such that each $f\in M$ 
can be written as a (semiring) linear combination of elements in $F$.
Let $w$ be a word and $f$ a word function. Then we denote by $w^{-1}f$ the word function $g$
defined by
$$
g(u) = (w^{-1}f)(u) = f(w \circ u)
$$
$M$ is {\em stable} if  for all words $w \in \Sigma^{\star}$ and for all $f \in \cM$ the  word function $w^{-1}f$ is also in $M$.

\subsection{ Word functions in $\MSOLEVAL_{\cS}$ are recognizable}
To prove the ``only if'' direction of Theorem \ref{th:main-1}  we use  the following two theorems.

For a commutative semiring $\cS$ and a sequence of indeterminates $\bar{X}=(X_1, \ldots , X_t)$
we denote by $\cS[\bar{X}]$ the commutative semiring of polynomials with indeterminates $\bar{X}$
and coefficients in $\cS$.
\ifappendix
The first theorem is from \cite{ar:MakowskyTARSKI}, and a proof for words is given in Appendix \ref{app:fv}.
\else
The first theorem is from \cite{ar:MakowskyTARSKI}.
\fi 

\begin{theorem}[Bilinear Decomposition Theorem for Word Functions]
\label{th:bdt}
\ \\
Let $\cS$ be a commutative semiring.
Let $f \in \MSOLEVAL_{\cS}$ be a word function $\Sigma^+ \rightarrow \cS$ of quantifier rank $qr(f)$.
There are:
\begin{renumerate}
\item
a function $\beta: \N \rightarrow \N$,
\item
a finite vector $F= (g_1, \ldots , g_{\beta(qr(f))})$ of functions in $\MSOLEVAL_{\cS}$ of length $\beta(qr(f))$,
with $f = g_i$ for some $i \leq \beta(qr(f))$,
\item
and for each $g_i \in F$, a matrix $M^{(i)} \in \cS^{\beta(qr(f)) \times \beta(qr(f))}$
\end{renumerate}
such that
$$g_i(u \circ v) = F(u) \cdot M^{(i)} F(v)^T.$$
\end{theorem}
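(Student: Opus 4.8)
The plan is to prove the Bilinear Decomposition Theorem by induction on the structure of the $\MSOL$-polynomial defining $f$, exploiting the Feferman--Vaught-style composition properties of $\MSOL$ over the concatenation of two words. The essential idea is that the value of an $\MSOL$-polynomial on a word $u \circ v$ can be reconstructed from a bounded amount of information about $u$ and a bounded amount of information about $v$, namely their $\MSOL$-Hintikka types up to the relevant quantifier rank $q = qr(f)$, together with how a set variable $\bar{R}$ ranging over positions of $u\circ v$ splits into a part on $u$ and a part on $v$. Since, for a fixed finite vocabulary $\tau$ and fixed $q$, there are only finitely many Hintikka types, this bounded information is exactly what gets packaged into the vector $F$ and the matrices $M^{(i)}$.

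First I would set up the bookkeeping: let $\beta(q)$ be (an upper bound on) the number of $\MSOL$-Hintikka types of quantifier rank $q$ over the word vocabulary $\tau = \langle \mathbf{R}_\le, P_0, P_1\rangle$, and let $F = (g_1,\dots,g_{\beta(q)})$ enumerate, for each type $\theta$, the word function $g_\theta$ obtained by running the \emph{same} arithmetic part of the definition of $f$ but with the outermost logical conditions restricted so that the relevant positions of the word realize type $\theta$ (so these are all still in $\MSOLEVAL_{\cS}$ of quantifier rank $\le q$, and $f$ itself arises as one of the $g_i$, or as a fixed linear combination of them, which we can absorb by enlarging $F$). The key structural step is the \emph{splitting lemma} for $\MSOL$ over concatenation: there is a function $\oplus_q$ on pairs of rank-$q$ types such that $\mathrm{tp}_q(\mathcal{A}_{u \circ v}) = \mathrm{tp}_q(\mathcal{A}_u) \oplus_q \mathrm{tp}_q(\mathcal{A}_v)$, and more generally, when $\bar R$ is a tuple of unary relations on $u\circ v$, splitting $\bar R$ into $\bar R_u$ on $u$ and $\bar R_v$ on $v$, the expanded type of $(u\circ v, \bar R)$ is determined by the expanded types of $(u,\bar R_u)$ and $(v,\bar R_v)$. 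This is the classical Feferman--Vaught / Courcelle composition theorem specialized to the one-dimensional "sum" operation of word concatenation, and I would cite it rather than reprove it.

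Given the splitting lemma, I would compute $g_i(u\circ v)$ by pushing the sums in the definition of $g_i$ through the concatenation. Each outermost $\sum_{\bar R}$ over unary relations on $u\circ v$ decomposes as a double sum $\sum_{\bar R_u}\sum_{\bar R_v}$; the inner monomial (a product over first-order-definable conditions) factors as a product of a part depending only on positions of $u$ and a part depending only on positions of $v$, because a product $\prod_{a : \phi(a)}$ over the universe of $u\circ v$ splits as $\prod_{a \in u : \phi^u} \cdot \prod_{a \in v : \phi^v}$ once we know the types of $u$ and $v$ determine which relativized formula $\phi$ becomes. Collecting terms according to the pair of types realized, one obtains precisely an expression of the form $\sum_{\theta,\theta'} c_{\theta,\theta'} \, g_\theta(u)\, g_{\theta'}(v)$ for suitable constants $c_{\theta,\theta'} \in \cS$ (the constants record which type-pairs $(\theta,\theta')$ satisfy the top-level logical condition defining $g_i$, via $\oplus_q$). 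Reading $c_{\theta,\theta'}$ as the entries of a matrix $M^{(i)}$ indexed by types gives $g_i(u\circ v) = F(u)\cdot M^{(i)} F(v)^T$, as required.

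The main obstacle I expect is the careful handling of the relativization of the arithmetic part: one must check that when an $\MSOL$-polynomial is defined with nested sums and products over subsets and elements of $u \circ v$, each such quantifier really can be localized to $u$ or to $v$ in a way compatible with the composition theorem, so that the "restricted" functions $g_\theta$ are again legitimate $\MSOLEVAL_{\cS}$ polynomials of the same quantifier rank. This requires proving, by induction on the inductive definition of $\MSOL$-polynomials in Definition~\ref{def:polynomials}, a simultaneous statement for all subpolynomials (a decomposition indexed by the pair of types of $u$ and $v$), rather than just for the top-level $f$; the base case (monomials $r^{\cardm{M}{v}{\phi}}$) uses that $\cardm{M}{v}{\phi}$ on $u\circ v$ is the sum of the corresponding counts on $u$ and on $v$ once the types are fixed, so $r^{(\cdot)}$ factors multiplicatively, and the inductive step for $\sum_{\bar R}$ is the type-splitting argument above. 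Keeping the quantifier-rank bookkeeping tight (so that $\beta$ depends only on $qr(f)$ and $|\tau|$, not on $u,v$) is the delicate but routine part.
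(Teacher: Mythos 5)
Your proposal is correct and follows essentially the same route as the paper, which obtains Theorem \ref{th:bdt} from the Feferman--Vaught-style composition argument of \cite{ar:MakowskyTARSKI}: bounding the number of rank-$q$ Hintikka types, splitting each sum over unary relations on $u \circ v$ into independent sums over $u$ and over $v$, factoring the monomials accordingly, and collecting coefficients into a matrix indexed by type pairs. The points you flag as delicate (relativizing the arithmetic part and keeping $\beta$ dependent only on $qr(f)$ and the vocabulary) are exactly the bookkeeping carried out in that reference.
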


The other theorem was first proved by G.~Jacob, \cite{phd:Jacob,bk:BerstelReutenauer}.
\begin{theorem}[G. Jacob 1975]
\label{Jacob}
Let $f$ be a word function $f:\Sigma^{\star} \rightarrow \mathcal{S}$. 
Then $f$ is recognizable by a weighted automaton over $\mathcal{S}$ 
iff there exists a finitely generated stable semimodule 
$\mathcal{M} \subseteq \cS^{\Sigma^{\star}}$
which contains $f$.
\end{theorem}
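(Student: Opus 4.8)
The plan is to prove both directions by exhibiting the natural correspondence between a weighted automaton and the semimodule spanned by the ``state functions'' it induces. First I would establish the forward direction. Suppose $f = f_A$ for a weighted automaton $A$ of size $r$ given by $\alpha,\gamma\in\cS^r$ and matrices $\mu_\sigma\in\cS^{r\times r}$. For each index $j\in[r]$ define the word function $f_j(w) = \bigl(\alpha\cdot\mu_{\sigma_1}\cdots\mu_{\sigma_{\ell(w)}}\bigr)_j$, i.e.\ the $j$-th coordinate of the row vector obtained by running $A$ on $w$ but not yet closing with $\gamma^T$. Let $\cM$ be the set of all $\cS$-linear combinations of $f_1,\dots,f_r$; this is by construction a finitely generated semimodule. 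It contains $f$ because $f(w) = \sum_{j} f_j(w)\gamma_j$. To see that $\cM$ is stable, note that for any letter $\sigma$ the row vector for $\sigma\circ w$ is $\alpha\cdot\mu_\sigma\cdot\mu_{w[1]}\cdots$, which means $\sigma^{-1}f_j = \sum_{k}(\mu_\sigma)_{jk}\, f_k \in \cM$; iterating letter by letter along any word $u$ gives $u^{-1}f_j\in\cM$, and since $w^{-1}(\cdot)$ is $\cS$-linear, $w^{-1}g\in\cM$ for every $g\in\cM$. Hence $\cM$ is a finitely generated stable semimodule containing $f$.

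For the converse, suppose $\cM\subseteq\cS^{\Sigma^\star}$ is a finitely generated stable semimodule with $f\in\cM$, generated by $g_1,\dots,g_r$. Since $f\in\cM$, write $f = \sum_j c_j g_j$ and set $\gamma = (c_1,\dots,c_r)$. By stability, for each letter $\sigma$ and each generator $g_j$ we have $\sigma^{-1}g_j\in\cM$, so we may choose coefficients $(\mu_\sigma)_{jk}\in\cS$ with $\sigma^{-1}g_j = \sum_k (\mu_\sigma)_{jk}\, g_k$; this defines the transition matrices $\mu_\sigma$. Finally take $\alpha = (g_1(\varepsilon),\dots,g_r(\varepsilon))$ where $\varepsilon$ is the empty word (reading $g_j(\varepsilon)$ as the value of $g_j$ on the empty word; if one works with $\Sigma^+$ instead, use $\alpha_j = g_j$ evaluated appropriately on a one-letter prefix and shift the induction). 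Then an induction on $\ell(w)$ shows that $\alpha\cdot\mu_{\sigma_1}\cdots\mu_{\sigma_{\ell(w)}}$ has $j$-th coordinate equal to $g_j(w)$: the base case is the definition of $\alpha$, and the inductive step uses $(\sigma^{-1}g_j)(w') = g_j(\sigma\circ w')$ together with the defining relation for $\mu_\sigma$. Dotting with $\gamma^T$ yields $f_A(w) = \sum_j c_j g_j(w) = f(w)$, so $f$ is recognizable.

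The routine parts are the two inductions on word length and the $\cS$-linearity bookkeeping; these go through over an arbitrary commutative semiring since no subtraction or inverses are used — only that $w^{-1}(\cdot)$ is additive and commutes with scalar multiplication, and that a linear combination of the generators exists for each derived function. The main subtlety, and the step I would be most careful about, is the treatment of the empty word versus $\Sigma^+$: the semimodule of left quotients is most cleanly indexed over all of $\Sigma^\star$ (so that $\varepsilon^{-1}g = g$ gives the initial vector for free), and one must make sure the chosen generating set is genuinely closed enough that the coefficients $(\mu_\sigma)_{jk}$ can be picked consistently — i.e.\ that $\sigma^{-1}g_j$ really does lie in the span of $g_1,\dots,g_r$ and not merely in $\cM$ described by some other generating set. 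Fixing one generating set at the outset and invoking stability with respect to that set resolves this; a brief remark ensuring the generators can be chosen to include enough quotients (or simply that any finite generating set works because stability is a property of $\cM$, not of the chosen generators) completes the argument. No later results from the paper are needed — this is the classical characterization cited as the bridge between Theorem \ref{th:bdt} and recognizability.
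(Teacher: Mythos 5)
The paper itself does not prove this theorem --- it imports it as a classical result of Jacob, citing \cite{phd:Jacob,bk:BerstelReutenauer} --- so your proposal is measured against the standard proof. Your architecture (span the ``state functions'' of the automaton in one direction; read off transition matrices from the quotients of a fixed generating set in the other) is exactly the classical one. However, both directions as written contain the same concrete error: you have carried out the argument for \emph{right} quotients $u \mapsto f(u \circ w)$, whereas the paper defines stability via \emph{left} quotients, $(w^{-1}f)(u) = f(w \circ u)$. In the forward direction your generators $f_j(w) = (\alpha\cdot\mu_{w(1)}\cdots\mu_{w(\ell(w))})_j$ satisfy $(\sigma^{-1}f_j)(u) = (\alpha\,\mu_\sigma\,\mu_u)_j = \sum_k (\alpha\mu_\sigma)_k\,(\mu_u)_{kj}$, while $\sum_k (\mu_\sigma)_{jk} f_k(u) = \sum_{i,k}(\mu_\sigma)_{jk}\,\alpha_i\,(\mu_u)_{ik}$; already for $u=\varepsilon$ (so $\mu_u = I$) the first is $\sum_k \alpha_k(\mu_\sigma)_{kj}$ and the second is $\sum_k (\mu_\sigma)_{jk}\alpha_k$, which differ unless $\mu_\sigma$ is symmetric. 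So the claimed identity $\sigma^{-1}f_j = \sum_k(\mu_\sigma)_{jk}f_k$ fails and your $\cM$ need not be stable. Symmetrically, in the converse the relation supplied by stability is $g_j(\sigma\circ u) = \sum_k(\mu_\sigma)_{jk}g_k(u)$, which peels a letter off the \emph{front}, but the induction you propose for $\alpha\cdot\mu_{\sigma_1}\cdots\mu_{\sigma_n}$ needs to peel a letter off the \emph{back}; the two do not mesh, and with your choice of $\alpha$ and $\gamma$ the automaton computes the wrong bilinear form.

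The fix is mechanical but necessary. For the forward direction take the \emph{column} functions $h_i(u) = (\mu_{u(1)}\cdots\mu_{u(\ell(u))}\cdot\gamma^T)_i$, i.e.\ the behaviour of $A$ started in state $i$; then $f = \sum_i \alpha_i h_i$ and $(\sigma^{-1}h_i)(u) = (\mu_\sigma\,\mu_u\,\gamma^T)_i = \sum_k(\mu_\sigma)_{ik}h_k(u)$, so the span of $h_1,\dots,h_r$ is a finitely generated stable semimodule containing $f$. For the converse, with $\sigma^{-1}g_j = \sum_k(\mu_\sigma)_{jk}g_k$, induction on $\ell(w)$ (peeling the first letter) gives $g_j(w) = (\mu_{w(1)}\cdots\mu_{w(\ell(w))}\cdot\gamma^T)_j$ with $\gamma_j = g_j(\varepsilon)$, and then $f = \sum_j c_j g_j$ yields $f(w) = \alpha\cdot\mu_{w(1)}\cdots\mu_{w(\ell(w))}\cdot\gamma^T$ with $\alpha = (c_1,\dots,c_r)$ --- note $\alpha$ and $\gamma$ are the opposite of what you chose. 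Your remarks about linearity of $w^{-1}(\cdot)$, iterating letter by letter, fixing one generating set, and the $\varepsilon$-versus-$\Sigma^+$ issue are all sound and survive the correction unchanged.
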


In order to prove the ``only if'' direction of Theorem \ref{th:main-1} we reformulate it. 

\begin{theorem}[Stable Semimodule Theorem]
\label{th:stable}
Let $\cS$ be a commutative semiring and let
$f \in \MSOLEVAL_{\cS}$ be a word function of quantifier rank $qr(f)$.
\\
There are:
\begin{renumerate}
\item
a function $\beta: \N \rightarrow \N$,
\item
a finite vector $F= (g_1, \ldots , g_{\beta(qr(f))})$ of functions in $\MSOLEVAL_{\cS}$ of length $\beta(qr(f))$,
with $f = g_i$ for some $i \leq \beta(qr(f))$,
\end{renumerate}
such that the semimodule $\cM[F]$ generated by $F$ is stable.
\end{theorem}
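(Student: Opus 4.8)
The plan is to derive Theorem~\ref{th:stable} directly from the Bilinear Decomposition Theorem (Theorem~\ref{th:bdt}). Given $f \in \MSOLEVAL_{\cS}$ of quantifier rank $qr(f)$, apply Theorem~\ref{th:bdt} to obtain the function $\beta$, the finite vector $F = (g_1, \ldots, g_{\beta(qr(f))})$ with $f = g_i$ for some $i$, and the matrices $M^{(i)} \in \cS^{\beta(qr(f)) \times \beta(qr(f))}$ satisfying $g_i(u \circ v) = F(u) \cdot M^{(i)} F(v)^T$ for all $u, v \in \Sigma^+$. I take $\cM[F]$ to be the semimodule generated by $\{g_1, \ldots, g_{\beta(qr(f))}\}$, i.e.\ the set of all $\cS$-linear combinations of the $g_j$. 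By construction $\cM[F]$ is finitely generated and contains $f$, so the only thing to verify is stability: for every word $w \in \Sigma^{\star}$ and every $h \in \cM[F]$, the shifted function $w^{-1}h$ lies in $\cM[F]$.

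The key computation is to show that each generator shifts back into the semimodule. Fix a letter $\sigma \in \Sigma$ (a word of length one) and a generator $g_i$. Using the decomposition with $u = \sigma$, for any word $v$ we have $(\sigma^{-1}g_i)(v) = g_i(\sigma \circ v) = F(\sigma) \cdot M^{(i)} F(v)^T = \sum_{j,k} F(\sigma)_j M^{(i)}_{jk} F(v)_k = \sum_k \bigl(\sum_j F(\sigma)_j M^{(i)}_{jk}\bigr) g_k(v)$, so $\sigma^{-1}g_i = \sum_k c_k\, g_k$ with $c_k = \sum_j F(\sigma)_j M^{(i)}_{jk} \in \cS$; hence $\sigma^{-1}g_i \in \cM[F]$. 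Since $w^{-1}(\alpha h_1 + \beta h_2) = \alpha (w^{-1}h_1) + \beta (w^{-1}h_2)$ and $(\sigma w')^{-1}h = (w')^{-1}(\sigma^{-1}h)$, closure under shifting by a single letter plus closure under $\cS$-linear combinations propagates, by induction on $\ell(w)$, to closure under shifting by an arbitrary word $w$. This shows $\cM[F]$ is stable.

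One point that needs care is the handling of the empty word and the domain $\Sigma^+$ versus $\Sigma^{\star}$. Theorem~\ref{th:bdt} is stated for word functions on $\Sigma^+$, and the zero position in the structure $\cA_w$ means the empty word needs separate bookkeeping; the fix is to note that a word function on $\Sigma^{\star}$ is determined by its restriction to $\Sigma^+$ together with its value on the empty word, and that $\varepsilon^{-1}f = f$ trivially, so stability on $\Sigma^+$ together with this single extra value suffices — alternatively, one absorbs the empty-word value into the vector $F$ by adding a constant function as an extra coordinate, which does not disturb the bilinear identity. The main obstacle, then, is not conceptual but lies in carefully matching the conventions of Theorem~\ref{th:bdt} (quantifier rank, the role of the ordering relation, the $\Sigma^+$ domain) with the definition of stable semimodule, and in being explicit that the coefficients $c_k$ genuinely lie in $\cS$ (only additions and multiplications of semiring elements occur, no subtractions), so that the argument is valid over an arbitrary commutative semiring and not just a ring.
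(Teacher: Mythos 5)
Your proposal is correct and follows essentially the same route as the paper: both derive stability from Theorem~\ref{th:bdt} by using the bilinear identity to write a shifted generator as an $\cS$-linear combination of the $g_k$ with coefficient vector $F(\cdot)\,M^{(i)}$, and then extend to arbitrary elements of $\cM[F]$ by linearity. The only cosmetic difference is that you shift by a single letter and induct on $\ell(w)$, whereas the paper applies the bilinear decomposition with $u=w$ directly, making the induction unnecessary; your extra remarks on the $\Sigma^+$ versus $\Sigma^{\star}$ bookkeeping and on the coefficients staying in $\cS$ are sound and slightly more careful than the paper's own writeup.
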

\begin{proof}
We take $F$ 
and the matrices $M^{(i)}$
\ifappendix
from Theorem \ref{th:bdt} stated in the introduction and proved in Appendix \ref{app:fv}.
\else
from Theorem \ref{th:bdt} stated in the introduction.
\fi 

We have to show that for every fixed word $w$ and $f \in \cM[F]$ the function $w^{-1}f \in \cM[F]$.
As $f \in \cM[F]$  there is a vector 
$A = (a_1, \ldots, a_{\beta(qr(f))}) \in \cS^{\beta(qr(f))}$ 
such that 
\begin{gather}
f(w) = A \cdot F^{T}(w) \notag
\end{gather}
for every fixed word $w$.
Here $F(w)$ is shorthand for $(g_1(w), \ldots, g_{\beta(qr(f))}(w))$.

Let $u$ be a word. We compute $(w^{-1}f)(u)$.
\begin{gather}
(w^{-1}f)(u) = f(w \circ u) = A \cdot F^{T}(w \circ u) = \notag \\ 
\sum_{i=1}^{\beta(qr(f))} a_i g_i(w \circ u) = 
\sum_{i=1}^{\beta(qr(f))} a_i F(w) M^{(i)} F^{T}(u) \notag
\end{gather}
We put $B_i = a_i F(w) M^{(i)}$ and
observe that $B_i \in \cS^{\beta(qr(f))}$.
If we take $B = \sum_i^{\beta(qr(f))} B_i$ we get that $(w^{-1}f)(u) = B \cdot F^T(u)$, hence
$w^{-1}f \in \cM[F]$.
\end{proof}

\subsection{Recognizable  word functions are definable in $\MSOLEVAL_{\cS}$}
For the ``if'' direction we proceed as follows:
\begin{proof}
Let $A$ be a weighted automaton 
of size $r$ over $\mathcal{S}$ for words in $\Sigma^{\star}$.
For a word $w$ with $\ell(w)=n$, given as a function $w: [n] \rightarrow \Sigma$,
the automaton $A$ defines the function 
\begin{gather}
\label{eq:run}
f_A(w) = \alpha \cdot \mu_{w(1)} \cdot \ldots \cdot \mu_{w(n)} \cdot \gamma^{T}.
\end{gather}
We
have to show that
$f_A \in \MSOLEVAL_{\cS}$.

To unify notation we define
\begin{gather}
M_{i,j}^{a}
= (\mu_{a})_{i,j}.
\notag
\end{gather}
Equation (\ref{eq:run})  is a product of $n$ matrices and two vectors.

Let $P$ be the product of these matrices,
\begin{gather}
P = \prod_{k=1}^n \mu_{w(k)}.
\notag
\end{gather}

Using 
matrix algebra we get for the entry $P_{a,b}$ of $P$:
\begin{gather}
P_{a,b} =
\sum_{i_{n-1}=1}^r 
\left(
\sum_{i_{n-2}=1}^r 
\left( \ldots \left(
\sum_{i_{1}=1}^r 
M_{a,i_1}^{w(1)} 
\cdot 
M_{i_1,i_2}^{w(2)} 
\right)
M_{i_2,i_3}^{w(3)} 
\right) \ldots \right)
M_{i_{n-1},b}^{w(n)} 
\notag \\
=
\sum_{i_1, \ldots i_{n-1} \leq r}
\left(
M_{a,i_1}^{w(1)} 
\cdot
M_{i_1,i_2}^{w(2)} 
\cdot
\ldots
\cdot
M_{i_{n-1},b}^{w(n)} 
\right)
\notag 
\end{gather}

Let $\pi: [n-1] \rightarrow [r]$ be the function 
with $\pi(k) =i_k$. We rewrite $P_{a,b}$ as:
\begin{gather}
\label{eq:Pab}
P_{a,b}  = 
\sum_{\pi: [n-1] \rightarrow [r]} \left( 
M_{a,\pi(1)}^{w(1)}  \cdot
M_{\pi(1), \pi(2)}^{w(2)}  \cdot \ldots
M_{\pi(n-1), b}^{w(n)}  \right)
\end{gather}
Next we compute the $b$ coordinate of the vector $\alpha \cdot P$:
\begin{gather}
(\alpha \cdot P)_{b} = 
\sum_{i=1}^r \alpha_i \cdot P_{i,b}
\notag
\end{gather}
Therefore 
\begin{gather}
f_A(w) = \alpha \cdot P \cdot \gamma =
\sum_{b=1}^r (\alpha \cdot P)_b \cdot \gamma_b \notag \\
=
\sum_{b=1}^r 
\left(
\sum_{a=1}^r \alpha_a \cdot P_{a,b}
\right) \cdot \gamma_b =
\sum_{a,b \leq r} \alpha_a \cdot P_{a,b} \cdot \gamma_b
\notag 
\end{gather}
and by using Equation (\ref{eq:Pab}) for $P_{a,b}$ we get:
\begin{gather}
\sum_{a,b \leq r} \alpha_a \cdot 
\left(
\sum_{\pi: [n-1] \rightarrow [r]} \left( 
M_{a,\pi(1)}^{w(1)}  \cdot
M_{\pi(1), \pi(2)}^{w(2)}  \cdot \ldots
M_{\pi(n-1), b}^{w(n)}  \right)
\right)
\cdot \gamma_b
\notag
\end{gather}

Now let $\pi': [n] \cup \{0\} \rightarrow [r]$ be the function for which
$\pi'(0) =a, \pi'(n)=b$ and $\pi'(k) =\pi(k)= i_k$ for $1 \leq  k \leq n-1$.
Then we get

\begin{gather}
\label{eq:run1}
f_A(w) = \notag \\
\sum_{\pi': [n] \cup \{0\} \rightarrow [r]}
\alpha_{\pi'(0)} \cdot 
\left[
M_{\pi'(0),\pi'(1)}^{w(1)} \cdot
\ldots
\cdot
M_{\pi'(n-1),\pi'(n)}^{w(n)} 
\right] 
\cdot
\gamma_{\pi'(n)}= \notag \\
\label{eq:run2}
\sum_{\pi': [n] \cup \{0\} \rightarrow [r]}
\alpha_{\pi'(0)} \cdot
\left(
\prod_{k \in [n]} M_{\pi'(k-1), \pi'(k)}^{ w(k)}
\right) \cdot
\gamma_{\pi'(n)} 
\end{gather}

To convert Equation (\ref{eq:run2}) into an expression in $\MSOLEVAL_{\cS}$ we use a few lemmas:

First, let $S$ be any set and $\pi: S \rightarrow [r]$ be any function.
$\pi$ induces a partition of $S$ into sets 
$U_1^{\pi}, \ldots , U_r^{\pi}$ by
$U_i^{\pi} =\{ s \in S : \pi(s)=i\}$.
Conversely, every partition
$\mathcal{U}= (U_1, \ldots , U_r)$ of $S$ induces a function $\pi_{\mathcal{U}}$ by setting 
$\pi_{\mathcal{U}}(s)= i$ for $s \in U_i$.
To pass between functions $\pi$ with finite range $[r]$ and partitions into $r$-sets we use the following lemma:

\begin{lemma}
\label{le:a}
Let $E(\pi)$ be any expression depending on $\pi$. 
\begin{gather}
\sum_{\pi:S \rightarrow [r]} E(\pi)=
\sum_{\mathcal{U}} E(\pi_{\mathcal{U}})=
\sum_{U_1, \ldots U_r: Partition(U_1, \ldots , U_r)} E(\pi_{\mathcal{U}})
\notag
\end{gather}
where $\mathcal{U}$ ranges over all partitions of $S$ into $r$ sets $U_i: i \in [r]$.
Clearly, $Partition(U_1, \ldots , U_r)$ can be written in $\MSOL$.
\end{lemma}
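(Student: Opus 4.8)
The plan is to prove Lemma~\ref{le:a} by exhibiting the bijection between functions $\pi\colon S \to [r]$ and ordered partitions of $S$ into $r$ (possibly empty) blocks, and then observing that the summand is literally unchanged under this bijection. First I would make precise what an \emph{ordered partition into $r$ sets} means here: a tuple $\mathcal{U}=(U_1,\dots,U_r)$ of pairwise disjoint subsets of $S$ whose union is $S$, where empty blocks are allowed. The $\MSOL$-formula $Partition(U_1,\dots,U_r)$ is then $\forall x\,\bigvee_{i\in[r]} x\in U_i \;\wedge\; \bigwedge_{i\neq j} \neg\exists x\,(x\in U_i \wedge x\in U_j)$, which is a finite quantifier-free-plus-prenex formula since $r$ is a fixed constant; this justifies the final remark.

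Next I would verify that $\pi \mapsto \mathcal{U}^{\pi} := (U_1^{\pi},\dots,U_r^{\pi})$ with $U_i^{\pi}=\pi^{-1}(i)$ and $\mathcal{U}\mapsto \pi_{\mathcal{U}}$ are mutually inverse. If $\pi$ is given, then $\pi_{\mathcal{U}^{\pi}}(s)$ is the unique $i$ with $s\in U_i^{\pi}$, i.e. the unique $i$ with $\pi(s)=i$, so $\pi_{\mathcal{U}^{\pi}}=\pi$. Conversely, if $\mathcal{U}=(U_1,\dots,U_r)$ satisfies $Partition$, then the blocks of $\mathcal{U}^{\pi_{\mathcal{U}}}$ are $\{s : \pi_{\mathcal{U}}(s)=i\}=\{s : s\in U_i\}=U_i$, so $\mathcal{U}^{\pi_{\mathcal{U}}}=\mathcal{U}$. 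Hence $\pi\mapsto \mathcal{U}^{\pi}$ is a bijection from $\{\pi\colon S\to[r]\}$ onto the set of tuples satisfying $Partition(U_1,\dots,U_r)$.

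Finally, since the three sums in the statement differ only by reindexing along this bijection --- the middle sum ranges over partitions $\mathcal{U}$ and the right-hand sum is just the same sum with the partition written out as its component tuple --- and the summand $E(\pi_{\mathcal{U}})$ at a partition $\mathcal{U}$ equals $E(\pi)$ at the corresponding $\pi$ because $\pi_{\mathcal{U}^{\pi}}=\pi$, the three sums are term-by-term equal and therefore equal as elements of $\cS$ (commutativity and associativity of $+$ in the semiring let us rearrange a finite sum freely; when $S$ is finite, as in our application with $S=[n]\cup\{0\}$, there are only finitely many terms). I do not expect a genuine obstacle here: the only point requiring a little care is the bookkeeping convention that empty blocks are permitted, so that the correspondence with \emph{all} functions $\pi$ (not just surjective ones) is exact; everything else is a routine unwinding of definitions.
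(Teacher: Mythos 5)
Your proposal is correct and matches the paper's (implicit) argument: the paper sets up exactly the correspondence $\pi \mapsto (U_1^{\pi},\dots,U_r^{\pi})$ and $\mathcal{U}\mapsto\pi_{\mathcal{U}}$ in the paragraph preceding the lemma and leaves the reindexing of the sum as immediate, which is what you have spelled out. Your explicit remark that empty blocks must be allowed so that the bijection covers all (not just surjective) functions $\pi$ is a worthwhile clarification of a convention the paper leaves tacit, but it does not constitute a different approach.
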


Second,
to convert the factors 
$\alpha_{\pi'(0)}$ and
$\gamma_{\pi'(n)}$ we proceed as follows:

\begin{lemma}
\label{le:b}
Let $\alpha_i$ be the unique value of the coordinate of $\alpha$
such that $0 \in U_i$.  
Similarly,
let $\gamma_i$ be the unique value of the coordinate of $\gamma$
such that $n \in U_i$.  
\begin{gather}
\alpha_{\pi'(0)}=
\prod_{i=1}^r
\prod_{0 \in U_i} \alpha_i
\notag
\\
\gamma_{\pi'(n)}=
\prod_{i=1}^r
\prod_{n \in U_i} \gamma_i
\notag
\end{gather}
\end{lemma}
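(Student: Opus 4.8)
The plan is to read each double product as an indicator that isolates a single coordinate, and then to invoke the defining property of a partition. First I would fix the interpretation of the inner product $\prod_{0 \in U_i} \alpha_i$: the index condition ``$0 \in U_i$'' is a $0$-ary predicate, so this product contributes the single factor $\alpha_i$ when $0 \in U_i$ holds and is the empty product, equal to the multiplicative unit $1$ of $\cS$, when it fails. Equivalently, one may read $\prod_{0 \in U_i}$ as $\prod_{a \in [n]\cup\{0\}:\, a=0\,\wedge\, U_i(a)}$, a product over those elements $a$ of the universe that equal $0$ and lie in $U_i$; since $0$ is the only candidate, this product again has at most one factor. This reading is exactly what makes the right-hand side an $\MSOLEVAL_{\cS}$-monomial, which is the reason the lemma is phrased this way.

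Next I would use that $(U_1, \ldots, U_r)$ is a partition of the universe $[n]\cup\{0\}$. By construction $\pi' = \pi_{\mathcal{U}}$, so $\pi'(0)$ is precisely the unique index $i$ with $0 \in U_i$; call it $i_0$. Hence in the outer product $\prod_{i=1}^r \prod_{0 \in U_i} \alpha_i$ every inner factor with $i \neq i_0$ is an empty product equal to $1$, while the factor for $i = i_0$ equals $\alpha_{i_0}$. Multiplying these together in $\cS$ leaves exactly $\alpha_{i_0} = \alpha_{\pi'(0)}$, which is the first asserted identity.

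Finally I would observe that the argument for $\gamma$ is verbatim the same with $0$ replaced by $n$: since the blocks partition the universe, $n$ lies in exactly one $U_i$, namely $U_{\pi'(n)}$, so the double product $\prod_{i=1}^r \prod_{n \in U_i} \gamma_i$ collapses to $\gamma_{\pi'(n)}$. The only point that requires any care is the empty-product convention, namely that an inner product over a failing condition evaluates to the unit $1$ of the semiring; granting this, both identities are immediate consequences of the uniqueness of the block containing a fixed element, and I do not expect any genuine obstacle.
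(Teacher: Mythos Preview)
Your proof is correct and follows essentially the same route as the paper: you use that the $U_i$ form a partition induced by $\pi'$, so exactly one inner product is nonempty, and you invoke the empty-product convention to conclude that all other inner factors equal $1$. The paper's proof records precisely these two observations in two sentences; your version simply spells them out in more detail.
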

\begin{proof}
First we note that, as $\mathcal{U}$ is the partition induced by $\pi'$,
the restriction of $\pi'$ to $U_i$ is constant for all $i \in [r]$.
Next we note that the product ranging over the empty set gives the value $1$.
\end{proof}

Similarly, to convert the factor 
$\prod_{k \in [n]} M_{\pi'(k-1), \pi'(k)}^{ w(k)}$ use the following lemma:

\begin{lemma}
\label{le:c}
Let $m_{i,j, w(v)}$ be the unique value of the $(i,j)$-entry of the matrix $\mu_{w(v)}$
such that $v \in U_i$ and $v+1 \in U_j$.
\begin{gather}
\prod_{k \in [n]} M_{\pi'(k-1), \pi'(k)}^{ w(k)} =
\prod_{i,j=1}^r
\left(
\prod_{v-1 \in U_i, v \in U_j} m_{i,j, w(v)}
\right)
\notag
\end{gather}
\end{lemma}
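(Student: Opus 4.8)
The plan is to verify the claimed identity by unwinding both sides into a product over positions and matching factors one position at a time, exactly in the spirit of Lemmas \ref{le:a} and \ref{le:b}. The left-hand side is a product indexed by $k \in [n]$, contributing one factor $M_{\pi'(k-1),\pi'(k)}^{w(k)}$ per value of $k$. On the right-hand side, for each ordered pair $(i,j) \in [r]^2$ the inner product ranges over those $v$ with $v-1 \in U_i$ and $v \in U_j$; since $\mathcal{U}=(U_1,\dots,U_r)$ is the partition induced by $\pi'$, each position $v \in [n]$ contributes to exactly one such pair, namely $(i,j)=(\pi'(v-1),\pi'(v))$, and for that pair the factor is $m_{i,j,w(v)} = (\mu_{w(v)})_{\pi'(v-1),\pi'(v)} = M_{\pi'(v-1),\pi'(v)}^{w(v)}$. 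So, reindexing the double product on the right as a single product over $v \in [n]$ (using commutativity of multiplication in $\cS$), the two sides agree factor by factor.

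First I would note that because the restriction of $\pi'$ to each $U_i$ is constant, the quantity $m_{i,j,w(v)}$ is well-defined: whenever $v-1 \in U_i$ and $v \in U_j$ we have $\pi'(v-1)=i$ and $\pi'(v)=j$, so the ``unique value'' referred to in the statement is genuinely unique. Next I would observe that the sets
$$
S_{i,j} = \{\, v \in [n] : v-1 \in U_i,\ v \in U_j \,\}, \qquad (i,j) \in [r]^2,
$$
form a partition of $[n]$: every $v \in [n]$ has $v-1$ in exactly one block $U_i$ and $v$ in exactly one block $U_j$, placing $v$ in exactly one $S_{i,j}$. Hence
$$
\prod_{i,j=1}^r \left( \prod_{v \in S_{i,j}} m_{i,j,w(v)} \right)
= \prod_{v \in [n]} m_{\pi'(v-1),\pi'(v),w(v)}
= \prod_{v \in [n]} M_{\pi'(v-1),\pi'(v)}^{w(v)},
$$
which, after renaming the index $v$ to $k$, is precisely the left-hand side. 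As in Lemma \ref{le:b}, an empty inner product (when $S_{i,j}=\emptyset$) contributes the neutral factor $1$ and so is harmless.

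There is no real obstacle here: the only point requiring a moment's care is the same one flagged in the proof of Lemma \ref{le:b}, namely that $\pi'$ is \emph{constant} on each block of the induced partition, which is what lets us read off $i$ and $j$ from membership of $v-1$ and $v$ in blocks and thereby match the indices of the matrix entries. Combining this lemma with Lemmas \ref{le:a} and \ref{le:b} then rewrites Equation (\ref{eq:run2}) as a sum over partitions $U_1,\dots,U_r$ satisfying $Partition(U_1,\dots,U_r)$ of a product of $\MSOL$-monomials, which by Lemma \ref{le:constants} and Proposition \ref{pro:products-1} is an expression in $\MSOLEVAL_{\cS}$; this will complete the ``if'' direction of Theorem \ref{th:main-1}.
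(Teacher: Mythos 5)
Your proof is correct and follows exactly the argument the paper intends: the paper states Lemma \ref{le:c} without an explicit proof, relying on the same observations made for Lemma \ref{le:b} (that $\pi'$ is constant on each block of the induced partition and that empty products contribute $1$), and your reindexing of the double product via the partition $S_{i,j}$ of $[n]$ is precisely the omitted verification. No further comment is needed.
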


Using the fact that
every element which is the interpretation of a term in $\cS$  can be written as an expression in $\MSOLEVAL_{\cS}$,
\ifappendix
Lemma \ref{le:constants} in Appendix \ref{app:inductive},
\else
Lemma \ref{le:constants} in Section \ref{appn:inductive},
\fi 
we can write
$U_i(v)$ instead of $v \in U_i$, and
see that the monomials
of Lemmas 
\ref{le:a}, \ref{le:b} and \ref{le:c}  are indeed in $\MSOLEVAL_{\cS}$.
Now we apply the fact that the pointwise product of two word functions in $\MSOLEVAL_{\cS}$ is again a function in
$\MSOLEVAL_{\cS}$,
\ifappendix
Proposition \ref{pro:products-1} in  Appendix \ref{app:inductive}, 
\else
Proposition \ref{pro:products-1} in  Section \ref{appn:inductive}, 
\fi 

to Lemmas
\ref{le:a}, \ref{le:b} and \ref{le:c} 
and complete the proof of Theorem \ref{th:2nd-a}.
\end{proof}

\section{Weighted $\MSOL$ and $\MSOLEVAL$}
\label{se:comp}

In this section we compare the formalism of weighted $\MSOL$, $\WMSOL$, with our $\MSOLEVAL_{\cS}$
for arbitrary commutative semirings.
In \cite{ar:DrosteGastin05,ar:DrosteGastin2007} and \cite{ar:Bollig-etal2010}
two fragments of weighted $\MSOL$ are discussed.
One is based on {\em unambiguous} formulas (a semantic concept), the other on {\em step formulas} 
based on the Boolean fragment of weighted $\MSOL$ (a syntactic definition).
The two fragments have equal expressive power, as stated in \cite{ar:Bollig-etal2010},
and characterize the functions recognizable by weighted automata. 
We denote both versions by $\RMSOL$.

\subsection{Syntax of $\WMSOL$, the weighted version of $\MSOL$}
The definitions and properties of $\WMSOL$ and its fragments are taken literally from \cite{ar:Bollig-etal2010}.
The syntax of formulas $\phi$ of weighted $\MSOL$, denoted by $\WMSOL$, is given inductively in Backus--Naur form by
\begin{gather}
\phi :: = k \mid P_a(x) \mid \neg P_a(x) \mid x \leq y \mid \neg x \leq y \mid x \in X \mid x \not\in X \notag \\
\mid \phi \vee \psi \mid \phi \wedge \psi 
\mid \exists x.\phi \mid \exists X. \phi
\mid \forall x.\phi \mid \forall X. \phi \notag
\end{gather}
where $k \in \cS$, $a \in \Sigma$.
The set of weighted $\MSOL$-formulas over the field $\cS$ and the alphabet $\Sigma$ is denoted
by $\MSOL(\cS,\Sigma)$.
{\em $\bMSOL$ formulas} and {\em $\bMSOL$-step formulas} are defined below.
$\bMSOL$ is the Boolean fragment of $\WMSOL$, and its name
is justified by Lemma \ref{le:bMSOL}.
$\RMSOL$ is the fragment of $\WMSOL$ where 
universal second order quantification is restricted
to $\bMSOL$ and first order universal quantification is restricted to $\bMSOL$-step formulas.

The syntax of weighted $\bMSOL$ is given by
\begin{gather}
\phi :: = 0 \mid 1 \mid 
P_a(x) \mid  x \leq y \mid x \in X 
\mid  \neg \phi \mid \phi \wedge \psi 
\mid \forall x.\phi \mid \forall X. \phi \notag
\end{gather}
where $a \in \Sigma$.

The set of weighted $\MSOL$-formulas over the commutative semiring $\cS$ and the alphabet $\Sigma$ is denoted
by $\WMSOL(\cS,\Sigma)$.

Instead of defining step-formulas as in 
\cite{ar:Bollig-etal2010}
we use Lemma 3 from
\cite{ar:Bollig-etal2010} as our definition.

A $\bMSOL$-step formula $\psi$ is a formula of the form
\begin{gather}
\label{def-step}
\psi = 
\bigvee_{i \in I} ( \phi_i \wedge k_i) 
\end{gather}
where $I$ is a finite set, $\phi_i \in \bMSOL$ and $k_i \in \cS$.

\subsection{Semantics of $\WMSOL$, and translation of $\RMSOL$ into $\MSOLEVAL_{\cS}$}
Next we define the semantics of $\WMSOL$ and, where it is straightforward,
simultaneously also its 
translations  
into $\MSOLEVAL_{\cS}$. 

The evaluations of weighted formulas $\phi \in \WMSOL(\cS,\Sigma)$ on a word $w$ are denoted by
$WE(\phi,w,\sigma)$, where $\sigma$ is an assignment of the variables of $\phi$
to positions, respectively sets of positions, in $w$.

We denote the evaluation of term $t$ of $\MSOLEVAL_{\cS}$ for a word $w$ and an assignment for the free
variables $\sigma$ by $E(t,w,\sigma)$.
$\tw(\phi)$ stands for the truth value of $\phi$
(subject to an assignment for the free variables), i.e., 
$E(\tw(\phi),w,\sigma) =0 \in \cS$ for false and
$E(\tw(\phi),w,\sigma) =1 \in \cS$ for true.
The term $\tw(\phi)$ is used as an abbreviation for
$$
\tw(\phi)= \sum_{U: U=A \wedge \phi} 1  
$$
where $U=A$ stands for $\forall x (U(x) \leftrightarrow x =x)$ and $U$ does not occur freely in $\phi$.
Indeed, we have
$$
E(\tw(\phi),w,\sigma) = \begin{cases} 1 & (w, \sigma) \models \phi \\ 0&  \mbox{ else } \end{cases}
$$

We denote by $TRUE(x)$ the formula $x=x$ with free first order variable $x$.
Similarly, $TRUE(X)$ denotes the formula $\exists y \in X \vee \neg \exists y \in X$ with free set variable $X$. 

The evaluations of formulas $\phi \in \WMSOL$ and their translations are now defined inductively.
\begin{renumerate}
\item
For $k \in \cS$ we have  $tr(k) =k$ and
$WE(k,w,\sigma)) = E(tr(k) ,w,\sigma)) = k$.
\item
For atomic formulas $\theta$ we have 
$tr(\theta) = \tw(\theta)$ 
and 
$$WE(\theta,w,\sigma) = 
E(tr(\theta),w, \sigma) =
E(\tw(\theta),w, \sigma)$$
\item
For negated atomic formulas we have
$$tr(\neg \theta) = 1 - tr(\theta) = 1 - \tw(\theta)$$ 
and
$$WE(\neg \theta,w, \sigma)= 
1 - E(\tw(\theta),w, \sigma).$$
\item
$tr(\phi_1 \vee \phi_2) = tr(\phi_1) + tr(\phi_2)$ and
$$WE(\phi_1 \vee \phi_2, w, \sigma)= E(tr(\phi_1) + tr(\phi_2), w, \sigma) =
E(tr(\phi_1), w , \sigma) + E(tr(\phi_2), w, \sigma).$$
\item
$tr(\exists x. \phi) = \sum_{x: TRUE(x)} tr(\phi)$ and
$$WE(\exists x. \phi, w , \sigma)= 
E(\sum_{x: TRUE(x)} tr(\phi, w, \sigma))=
\sum_{x: TRUE(x)} E(tr(\phi, w, \sigma)).$$
\item
$tr(\exists X. \phi) = \sum_{X:  TRUE(X) } tr(\phi)$ and
$$WE(\exists X. \phi, w , \sigma)= 
E(\sum_{X:  TRUE(X) } tr(\phi, w, \sigma))=
\sum_{X: TRUE(X) } E(tr(\phi, w, \sigma)).$$
\item
$tr(\phi_1 \wedge \phi_2) = tr(\phi_1) \cdot tr(\phi_2)$ and
$$WE(\phi_1 \wedge \phi_2, w, \sigma)=
E(tr(\phi_1) \cdot tr(\phi_2), w, \sigma) =
E(tr(\phi_1), w , \sigma) \cdot E(tr(\phi_2), w \sigma).$$
\setcounter{ourown}{\value{enumi}}
\end{renumerate}
So far the definition of $WE$ was given using the evaluation function $E$
and the translation was straightforward.
Problems arise with the universal quantifiers.

The unrestricted definition of $WE$ for $\WMSOL$
given below gives us functions which are not recognizable by weighted automata,
and the straightforward translation defined below gives us expressions which are not in $\MSOLEVAL_{\cS}$:
\begin{renumerate}
\addtocounter{enumi}{\value{ourown}}
\item
$tr(\forall x. \phi) = \prod_{x: TRUE(x)} tr(\phi)$ and
$$WE(\forall x. \phi, w , \sigma)= 
E(\prod_{x: TRUE(x)} tr(\phi, w, \sigma))=
\prod_{x: TRUE(x)} E(tr(\phi, w, \sigma)).$$

The formula $\phi_{sq}= \forall x. \forall y. 2$
gives the function $2^{\ell(w)^2}$
and is not a $\bMSOL$-step formula.
The straightforward translation $tr$ gives
the term 
\begin{gather}
\prod_{x: TRUE(x)} \left( 
\prod_{y: TRUE(y)}  2 \right) = \prod_{ (x,y) : TRUE(x,y)} 2 ,
\notag
\end{gather} 
which
is a product over the tuples of a binary relation, hence not in $\MSOLEVAL_{\cS}$.
\item
$tr(\forall X. \phi) = \prod_{X: TRUE(X) } tr(\phi)$ and
$$WE(\forall X. \phi, w , \sigma)= 
E(\prod_{X:  TRUE(X) } tr(\phi, w, \sigma))=
\prod_{X: TRUE(X) } E(tr(\phi, w, \sigma)).$$

Here the translation gives a  product $\prod_{X:TRUE(X)}$ ranging over subsets, which is not an expression in $\MSOLEVAL_{\cS}$.
\end{renumerate}

In $\RMSOL$,
universal second order quantification is restricted  to formulas of
$\bMSOL$, and first order universal quantification is restricted to $\bMSOL$-step formulas.

In \cite[page 590]{ar:Bollig-etal2010}, after Figure 1, the following is stated:

\begin{lemma}
\label{le:bMSOL}
The evaluation $WE$ of a $\bMSOL$-formula $\phi$ assumes values in $\{0,1\}$ and coincides with
the standard semantics of $\phi$ as an unweighted $\MSOL$-formula.
\end{lemma}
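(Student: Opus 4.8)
The plan is to prove the statement by structural induction on the $\bMSOL$-formula $\phi$, following the grammar
$$
\phi :: = 0 \mid 1 \mid P_a(x) \mid x \leq y \mid x \in X \mid \neg \phi \mid \phi \wedge \psi \mid \forall x.\phi \mid \forall X.\phi,
$$
showing simultaneously that (a) $WE(\phi, w, \sigma) \in \{0, 1\}$ for every word $w$ and every assignment $\sigma$, and (b) $WE(\phi, w, \sigma) = 1$ if and only if $(w, \sigma) \models \phi$ in the usual Tarskian sense of $\MSOL$. Note that the two parts must be proved together, since in the inductive steps for $\neg$, $\wedge$, $\forall x$, and $\forall X$ the evaluation $WE$ is defined by the semiring operations (subtraction, product, iterated product), and to conclude that these land back in $\{0,1\}$ one needs to know the \emph{values} of the subformulas' evaluations, which is exactly what part (b) pins down.

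First I would handle the base cases. For $\phi = 0$ and $\phi = 1$ the values are the semiring elements $0$ and $1$, matching falsity and truth. For the atomic formulas $P_a(x)$, $x \leq y$, $x \in X$, the semantics clause (ii) gives $WE(\theta, w, \sigma) = E(\tw(\theta), w, \sigma)$, and by the displayed identity for $E(\tw(\phi), w, \sigma)$ this equals $1$ if $(w, \sigma) \models \theta$ and $0$ otherwise; so both (a) and (b) hold. Next, the inductive steps. For negation, $WE(\neg \phi, w, \sigma) = 1 - WE(\phi, w, \sigma)$ by clause (iii); by the induction hypothesis $WE(\phi, w, \sigma) \in \{0,1\}$, so the difference is again in $\{0,1\}$ (here one uses that in a commutative semiring containing $\N$, $1 - 0 = 1$ and $1 - 1 = 0$ are well-defined as written in clause (iii)), and it equals $1$ exactly when $WE(\phi,w,\sigma) = 0$, i.e.\ when $(w,\sigma) \not\models \phi$, which is the semantics of $\neg \phi$. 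For conjunction, $WE(\phi_1 \wedge \phi_2, w, \sigma) = WE(\phi_1, w, \sigma) \cdot WE(\phi_2, w, \sigma)$; a product of two elements of $\{0,1\}$ is in $\{0,1\}$, and equals $1$ iff both factors are $1$, iff $(w,\sigma) \models \phi_1$ and $(w,\sigma) \models \phi_2$.

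For the universal quantifiers, which are where the restriction to $\bMSOL$ actually matters: $WE(\forall x.\phi, w, \sigma) = \prod_{x : TRUE(x)} WE(\phi, w, \sigma[x \mapsto \cdot])$, a finite product (over the finitely many positions of $w$) of factors each lying in $\{0,1\}$ by the induction hypothesis; such a product is in $\{0,1\}$ and equals $1$ iff every factor is $1$, i.e.\ iff $(w, \sigma[x\mapsto a]) \models \phi$ for all positions $a$, which is precisely $(w,\sigma)\models \forall x.\phi$. The case $\forall X.\phi$ is identical, with the product ranging over the finitely many subsets of the (finite) set of positions. I expect the main point to watch — rather than a genuine obstacle — is making sure the two claims are carried through the induction \emph{jointly}: it is tempting to prove (a) first in isolation, but the closure of $\{0,1\}$ under the clause for $\neg$ (and the correctness of the $1 - \tw(\theta)$ idiom) is only transparent once one knows the values are Boolean, so the bookkeeping must keep (b) available at every step. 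One should also note explicitly that $WE$ on a $\bMSOL$-formula never invokes the clauses for $\vee$, $\exists$, or $k \in \cS$ (for $k \notin \{0,1\}$), so no non-Boolean value is ever introduced; this is exactly the syntactic reason the lemma holds, and it is worth stating in one sentence before the induction.
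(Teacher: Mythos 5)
Your induction is correct, but note that the paper itself gives no proof of this lemma: it is quoted verbatim from Bollig et al.\ \cite{ar:Bollig-etal2010} (the remark after Figure~1 there), so there is no ``paper proof'' to match against. Your argument is the standard one and is essentially what the cited source relies on. Two points you handle are worth keeping explicit. First, the joint induction on (a) Boolean-valuedness and (b) agreement with Tarskian semantics is genuinely necessary, and the place where it bites is negation: the paper's list of $WE$-clauses only defines evaluation for \emph{negated atomic} formulas, whereas the $\bMSOL$ grammar allows $\neg\phi$ for arbitrary $\phi$, so the clause $WE(\neg\phi)=1-WE(\phi)$ only makes sense in a general commutative semiring once the induction hypothesis guarantees $WE(\phi)\in\{0,1\}$ (where $1-0:=1$ and $1-1:=0$); you say exactly this. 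Second, your closure arguments for $\wedge$, $\forall x$ and $\forall X$ implicitly use that $0$ is multiplicatively absorbing and that the semiring is nontrivial ($0\neq 1$) to get ``product equals $1$ iff all factors equal $1$''; both are standard semiring axioms (and in the degenerate case $0=1$ the lemma is vacuous), but a one-line mention would make the step airtight. Your closing observation --- that a $\bMSOL$-formula never invokes the clauses for $\vee$, $\exists$, or constants $k\notin\{0,1\}$, so no non-Boolean value can ever enter --- is precisely the syntactic content of the lemma and is the right thing to state up front.
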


Because the translation of universal quantifiers using
$tr$ leads outside of $\MSOLEVAL_{\cS}$,
we 
define a proper translation 
$tr': \RMSOL \rightarrow \MSOLEVAL_{\cS}$.

Using Lemma \ref{le:bMSOL} we set $tr'(\phi)=\tw(\phi)$, for $\phi$ a $\bMSOL$-formula.

For universal first order quantification of $\bMSOL$-step formulas  
\begin{gather}
\psi = 
\bigvee_{i \in I} ( \phi_i \wedge k_i) 
\end{gather}
we compute $WE(\forall x.\psi, w , \sigma)$ and 
$E(tr(\forall x.\psi), w, \sigma)$ 
as follows, leaving the steps for
the translation  of $tr(\forall x.\psi)$ to the reader.

\begin{gather}
WE((\forall x. \psi), w, \sigma ) = 
E( tr(\forall x. \psi), w, \sigma ) =  \notag \\
E(tr( \forall x. \bigvee_{i \in I} ( \phi_i \wedge k_i) ), w , \sigma ) = \notag \\
\prod_{x: TRUE(x)} E(tr(\bigvee_{i \in I} ( \phi_i \wedge k_i))) , w , \sigma )= \notag \\
\prod_{x: TRUE(x)} (\sum_{i \in I} ( E(tr'(\phi_i)) \cdot k_i), w , \sigma ) ) = \notag \\
\prod_{x: TRUE(x)} (\sum_{i \in I} ( E(\tw(\phi_i), w , \sigma ) \cdot k_i)) ) \notag 
\end{gather}
Clearly, the formula of the last line,
$\prod_{x: TRUE(x)} (\sum_{i \in I} ( \tw(\phi_i)) \cdot k_i) )$
is an expression in $\MSOLEVAL_{\cS}$.

For universal second order quantification of $\bMSOL$-formulas   $\psi$
we use Lemma \ref{le:bMSOL} and get

\begin{gather}
WE(\forall X. \psi, w , \sigma ) =  E(tr'( \forall X \psi), w , \sigma ) = E(\tw( \forall X \psi), w , \sigma ) 
\notag
\end{gather}
Clearly, the expression $\tw( \forall X \psi)$
is an expression in $\MSOLEVAL_{\cS}$.
Thus we have proved:
\begin{theorem}
\label{th:transl}
Let $\cS$  be a commutative semiring.
For every expression $\phi \in \RMSOL$ there is an expression $tr'(\phi) \in \MSOLEVAL_{\cS}$
such that
$WE(\phi,w, \sigma) = E(tr'(\phi), w, \sigma)$,
i.e., $\phi$ and $tr'(\phi)$ define the same word function.
\end{theorem}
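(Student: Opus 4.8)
The plan is to proceed by structural induction on the build-up of an $\RMSOL$-formula $\phi$, defining $tr'(\phi)$ along the way and simultaneously verifying the identity $WE(\phi,w,\sigma) = E(tr'(\phi),w,\sigma)$ for every word $w$ and every assignment $\sigma$ of the free variables. The base cases — constants $k\in\cS$, atomic formulas $\theta$, and negated atomic formulas — and the inductive cases for $\vee$, $\wedge$, $\exists x$ and $\exists X$ are exactly the clauses (i)--(vii) already displayed in the excerpt: there the straightforward translation $tr$ lands inside $\MSOLEVAL_{\cS}$ and the equality $WE = E\circ tr$ holds by definition of the two semantics together with the inductive hypothesis. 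So on the common fragment we simply set $tr' = tr$ and nothing new is needed.

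The two remaining cases are the universal quantifiers, and here $\RMSOL$ restricts the syntax so that the naive translation $tr$, which would produce a product over tuples or over subsets, can be replaced by something legitimate. For universal second-order quantification $\forall X.\psi$ with $\psi \in \bMSOL$, I would invoke Lemma~\ref{le:bMSOL}: the whole subformula $\forall X.\psi$ is again a $\bMSOL$-formula whose weighted evaluation $WE$ takes values in $\{0,1\}$ and agrees with the ordinary $\MSOL$-semantics. Hence I set $tr'(\forall X.\psi) = \tw(\forall X.\psi)$, and the defining property of $\tw$, namely $E(\tw(\chi),w,\sigma) = 1$ iff $(w,\sigma)\models\chi$ and $0$ otherwise, gives $E(tr'(\forall X.\psi),w,\sigma) = WE(\forall X.\psi,w,\sigma)$ immediately; and $\tw(\forall X.\psi)$ is an $\MSOLEVAL_{\cS}$-expression by its very definition as $\sum_{U: U=A\wedge \forall X.\psi} 1$. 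For universal first-order quantification $\forall x.\psi$ the subformula $\psi$ is a $\bMSOL$-step formula $\psi = \bigvee_{i\in I}(\phi_i\wedge k_i)$ with each $\phi_i\in\bMSOL$; I set $tr'(\forall x.\psi) = \prod_{x:TRUE(x)}\bigl(\sum_{i\in I}\tw(\phi_i)\cdot k_i\bigr)$ and read off the chain of equalities already written out in the excerpt, in which the key move is that $WE$ commutes with $\forall x$ as a product over positions, then $WE$ of the step formula equals $\sum_{i\in I} WE(\phi_i,w,\sigma)\cdot k_i$, and finally Lemma~\ref{le:bMSOL} identifies each $WE(\phi_i,\cdot,\cdot)$ with $E(\tw(\phi_i),\cdot,\cdot)$. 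This expression is in $\MSOLEVAL_{\cS}$ because constants in $\cS$ are $\MSOL$-monomials (Lemma~\ref{le:constants}), $\MSOLEVAL_{\cS}$ is closed under the sum construction, and the outer product is a product over \emph{elements} $x$ of the universe, which is precisely the allowed form of an $\MSOLEVAL_{\cS}$-monomial.

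Assembling these cases, every $\phi\in\RMSOL$ is built using the connectives and quantifiers of the grammar with the $\RMSOL$-restrictions on $\forall$, so the induction covers all of $\RMSOL$ and produces $tr'(\phi)\in\MSOLEVAL_{\cS}$ with $WE(\phi,w,\sigma) = E(tr'(\phi),w,\sigma)$ for all $w$ and $\sigma$; in particular, taking $\phi$ a sentence, $\phi$ and $tr'(\phi)$ define the same word function, which is the statement of Theorem~\ref{th:transl}. The one point that needs care — the ``main obstacle'' — is the universal first-order case: one must check that in an $\RMSOL$-formula a universally quantified first-order variable is \emph{always} immediately applied to a $\bMSOL$-step formula and never to an arbitrary subformula, since it is exactly this syntactic restriction that keeps the translation inside $\MSOLEVAL_{\cS}$ (the example $\phi_{sq} = \forall x.\forall y.\,2$ in the excerpt shows what goes wrong otherwise). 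This is guaranteed by the definition of $\RMSOL$, but it is the hinge of the argument and should be stated explicitly; once it is in place, the rest is the routine bookkeeping sketched above.
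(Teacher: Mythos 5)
Your proposal is correct and follows essentially the same route as the paper: the straightforward translation $tr$ handles constants, atomics, $\vee$, $\wedge$, and the existential quantifiers, while the two universal cases are repaired exactly as the paper does — via Lemma \ref{le:bMSOL} and $\tw(\forall X.\psi)$ for second-order universals over $\bMSOL$, and via $\prod_{x:TRUE(x)}\bigl(\sum_{i\in I}\tw(\phi_i)\cdot k_i\bigr)$ for first-order universals over $\bMSOL$-step formulas. Your closing remark that the syntactic restrictions of $\RMSOL$ are the hinge of the argument matches the paper's motivation for introducing $tr'$ in place of $tr$.
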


\subsection{Translation from $\MSOLEVAL_{\cS}$ to $\RMSOL$}

It follows from our Theorem \ref{th:main-1} and the characterization in \cite{ar:DrosteGastin2007}
of recognizable word functions
as the functions definable in $\RMSOL$,
that the converse is also true.
We now give a direct proof of the converse without using weighted automata.
\begin{theorem}
\label{th:transl-1}
Let $\cS$  be a commutative semiring.
For every expression 
$t \in \MSOLEVAL_{\cS}$ there is a formula
$\phi_t \in \RMSOL$ 
such that
$WE(\phi_t,w, \sigma) = E(t, w, \sigma)$,
i.e., $\phi_t$ and $t$ define the same word function.
\end{theorem}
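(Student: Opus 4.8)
The plan is to induct on the structure of the term $t \in \MSOLEVAL_{\cS}$, mirroring the inductive definition of $\MSOL$-polynomials (Definitions \ref{def:monomials} and \ref{def:polynomials}), and at each stage exhibit an $\RMSOL$-formula $\phi_t$ together with a proof that $WE(\phi_t, w, \sigma)$ equals $E(t, w, \sigma)$. The base cases are the standard $\MSOL$-monomials $r^{\cardm{M}{v}{\phi}}$: by Lemma \ref{le:constants} each constant $c \in \cS$ (and each indeterminate, once we substitute a semiring value) is itself such a monomial, and the identity $r^{\cardm{M}{v}{\phi}} = \prod_{v : \phi(v)} r$ translates into the $\RMSOL$ formula $\forall v.\, \psi$ where $\psi$ is the $\bMSOL$-step formula $(\phi(v) \wedge r) \vee (\neg\phi(v) \wedge 1)$; indeed, by the semantics of $\forall x$ in $\WMSOL$ and the computation of $WE(\forall x.\psi,\cdot)$ already carried out in the excerpt, the value is exactly $\prod_{v : \phi(v)} r$. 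So monomials land in $\RMSOL$ via restricted first-order universal quantification over step formulas.

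For the inductive steps there are two operations to handle: finite products of monomials, and the summation operator $\sum_{\bar R : \langle \mathfrak{M}, \bar R\rangle \models \phi(\bar R)} t'$ over tuples of unary relations. Products are immediate: $WE(\phi_1 \wedge \phi_2, w, \sigma) = WE(\phi_1,w,\sigma) \cdot WE(\phi_2,w,\sigma)$, so if $t = t_1 \cdot t_2$ with translations $\phi_{t_1}, \phi_{t_2} \in \RMSOL$, then $\phi_t := \phi_{t_1} \wedge \phi_{t_2}$ works, provided conjunction of $\RMSOL$ formulas stays in $\RMSOL$ — which it does, since conjunction imposes no restriction on the quantifier structure already present. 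For the summation, I would use existential second-order quantification: $\sum_{\bar R : \phi(\bar R)} t'$ should translate to $\exists R_1.\ \cdots\ \exists R_m.\, \bigl( \tw(\phi(\bar R)) \wedge \phi_{t'}\bigr)$, where the conjunct $\tw(\phi(\bar R))$ — the $\bMSOL$ formula expressing $\phi$, evaluating to $0$ or $1$ by Lemma \ref{le:bMSOL} — filters out the tuples $\bar R$ not satisfying $\phi$, and $\phi_{t'} \in \RMSOL$ is the translation of $t'$ (over the enlarged vocabulary $\tau \cup \{\bar R\}$, with $\bar R$ now free set variables). Since $WE(\exists X.\,\chi, w, \sigma) = \sum_{X : TRUE(X)} WE(\chi, w, \sigma)$ and $WE(\psi_1 \wedge \psi_2) = WE(\psi_1) \cdot WE(\psi_2)$, a straightforward computation gives $WE(\phi_t, w, \sigma) = \sum_{\bar R} \bigl( [\,\langle\mathfrak{M},\bar R\rangle \models \phi\,] \cdot E(t', w, \sigma[\bar R])\bigr) = E(t, w, \sigma)$, as required.

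The main obstacle — and the only point needing genuine care rather than bookkeeping — is verifying that every formula produced by this recursion actually lies in the fragment $\RMSOL$, i.e. that universal second-order quantification never appears outside $\bMSOL$ and first-order universal quantification appears only inside $\bMSOL$-step formulas. The construction introduces $\forall v$ only in the base-case monomials, always applied to a genuine step formula $(\phi(v) \wedge r) \vee (\neg\phi(v) \wedge 1)$, and it introduces $\exists$ (both sorts) freely in the summation step, but it never introduces $\forall X$ at all; the unary-relation summations of $\MSOLEVAL_{\cS}$ are existential, not universal, so no conflict arises. One should also check closure of $\RMSOL$ under the propositional connectives used ($\wedge$ with a $\bMSOL$ formula, $\vee$, $\neg$ applied only to atoms inside step formulas), which is routine from the Backus--Naur grammar. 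Finally, I would note that the free-variable assignment $\sigma$ is carried through the induction unchanged except that the summation step extends it to the new second-order variables $\bar R$, exactly as in the definition of $E$ for $\MSOL$-polynomials, so the claimed equality $WE(\phi_t, w, \sigma) = E(t, w, \sigma)$ propagates cleanly. This completes the sketch.
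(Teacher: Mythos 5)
Your proof follows essentially the same route as the paper's: monomials are translated as $\forall x$ applied to the step formula $(\phi(x)\wedge r)\vee(\neg\phi(x)\wedge 1)$, and the sum over unary relations becomes existential second-order quantification with the ($\bMSOL$-version of the) guard formula as a conjunct, exactly as in the paper's $\exists U.(\phi_t\wedge\phi(U))$. You additionally spell out the product case and the verification that the construction stays inside $\RMSOL$, which the paper's two-case proof leaves implicit; these are correct and harmless elaborations rather than a different argument.
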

\begin{proof}
\begin{renumerate}
\item
Let $t = \prod_{x: \phi(x)} \alpha$ be a $\MSOLEVAL_{\cS}$- monomial.
We note that 
$$
\alpha \cdot \tw(\phi)+ \tw(\neg \phi) =
\begin{cases}
\alpha &  \mbox{ if } \phi \mbox{ is true }\\ 
1 &  \mbox{else}
\end{cases}
$$
Furthermore, by Lemma \ref{le:bMSOL} $\phi \in \bMSOL$.
So we  put 
$$\phi_t = \forall x. ( (\phi(x) \wedge \alpha) \vee \neg \phi(x))$$
\item
Let $t_1 = \sum_{U: \phi(U)} t$ and let $\phi_t$ be the translation of $t$.
Then
$$
\phi_{t_1} =\exists U. (\phi_t \wedge \phi(U))
$$
\end{renumerate}
\end{proof}
\section{Conclusions}
\label{se:conclu}

We have given two proofs that $\RMSOL$ and $\MSOLEVAL$ with values in $\cS$ 
have the same expressive power over words. 
One proof uses model theoretic tools to show directly that $\MSOLEVAL$ captures the functions recognizable by weighted automata.
The other proof shows how to translate the formalisms from one into the other.
Adapting the translation proof, it should be possible to extend the result to tree functions as well, cf. \cite{ar:DrosteVogler2006}.

Although in this paper we dealt only with word functions, 
our formalism $\MSOLEVAL$, introduced first fifteen years ago, was originally designed to
deal with definability of graph parameters and graph polynomials,
\cite{ar:CourcelleMakowskyRoticsDAM,ar:MakowskyTARSKI,ar:MakowskyZoo,ar:KotekMakowskyZilber11}. 
It has been useful, since,  in many applications in algorithmic and structural graph theory and descriptive complexity.
Its use in characterizing word functions recognizable by weighted automata is new.
$\MSOLEVAL$ can be seen as an analogue of the {\em Skolem elementary functions} aka {\em lower elementary functions}, \cite{ar:Skolem1962,ar:Volkov2010},
adapted to the framework of {\em meta-finite model theory} as defined in \cite{ar:GraedelGurevich}.

The formalism $\WMSOL$ of weighted logic was first invented in 2005 in \cite{ar:DrosteGastin05} and since then used
to characterize word and tree functions recognizable by weighted automata, \cite{ar:DrosteVogler2006}.
These characterizations need some syntactic restrictions which lead to the formalisms of $\RMSOL$.
No such syntactic restrictions are need for the characterization of recognizable word functions using $\MSOLEVAL$.
The weighted logic $\WMSOL$ can also be defined for general relational structures.
However, it is not immediate which syntactic restrictions are needed, if at all,  to obtain
algorithmic applications similar to the ones obtained using $\MSOLEVAL$, 
cf. \cite{ar:CourcelleMakowskyRoticsDAM,ar:CourcelleMakowskyRoticsTOCS,ar:Makowsky2005}.
\paragraph*{\footnotesize Acknowledgements}
{
\footnotesize
We are thankful to 
Jacques Sakarovitch,
G\'eraud S\'enizergues  and
Amir Shpilka 
for useful guidence on the
subject of weighted automata.
We are thankful to 
Manfred Droste, 
Tomer Kotek and Elena Ravve 
and several anonymous readers,
for their valuable comments.
}
\nocite{*}
\bibliographystyle{eptcs}
\bibliography{gandalf-final}
\newpage
\ifappendix
\appendix
\input{ams-inductive}
\input{appendix-bdt-words}
\else
\fi 
\end{document}

\ifskip 
\else
\section{Introduction}

The optional arguments of {\tt $\backslash$documentclass$\{$eptcs$\}$} are
\begin{itemize}
\item at most one of
{\tt adraft},
{\tt submission} or
{\tt preliminary},
\item at most one of {\tt publicdomain} or {\tt copyright},
\item and optionally {\tt creativecommons},
  \begin{itemize}
  \item possibly augmented with
    \begin{itemize}
    \item {\tt noderivs}
    \item or {\tt sharealike},
    \end{itemize}
  \item and possibly augmented with {\tt noncommercial}.
  \end{itemize}
\end{itemize}
We use {\tt adraft} rather than {\tt draft} so as not to confuse hyperref.
The style-file option {\tt submission} is for papers that are
submitted to {\tt $\backslash$event}, where the value of the latter is
to be filled in in line 2 of the tex-file. Use {\tt preliminary} only
for papers that are accepted but not yet published. The final version
of your paper that is to be uploaded at the EPTCS website should have
none of these style-file options.

By means of the style-file option
\href{http://creativecommons.org/about/license/}{creativecommons}
authors equip their paper with a Creative Commons license that allows
everyone to copy, distribute, display, and perform their copyrighted
work and derivative works based upon it, but only if they give credit
the way you request. By invoking the additional style-file option {\tt
noderivs} you let others copy, distribute, display, and perform only
verbatim copies of your work, but not derivative works based upon
it. Alternatively, the {\tt sharealike} option allows others to
distribute derivative works only under a license identical to the
license that governs your work. Finally, you can invoke the option
{\tt noncommercial} that let others copy, distribute, display, and
perform your work and derivative works based upon it for
noncommercial purposes only.

Authors' (multiple) affiliations and emails use the commands
{\tt $\backslash$institute} and {\tt $\backslash$email}.
Both are optional.
Authors should moreover supply
{\tt $\backslash$titlerunning} and {\tt $\backslash$authorrunning},
and in case the copyrightholders are not the authors also
{\tt $\backslash$copyrightholders}.
As illustrated above, heuristic solutions may be called for to share
affiliations. Authors may apply their own creativity here.

Exactly 46 lines fit on a page.
The rest is like any normal {\LaTeX} article.
We will spare you the details.
The rest is like any normal {\LaTeX} article.
We will spare you the details.\\
The rest is like any normal {\LaTeX} article.
We will spare you the details.\\
The rest is like any normal {\LaTeX} article.
We will spare you the details.\\
The rest is like any normal {\LaTeX} article.
We will spare you the details.\\
The rest is like any normal {\LaTeX} article.
We will spare you the details.\hfill6\\
The rest is like any normal {\LaTeX} article.
We will spare you the details.\\
The rest is like any normal {\LaTeX} article.
We will spare you the details.\\
The rest is like any normal {\LaTeX} article.
We will spare you the details.\\
The rest is like any normal {\LaTeX} article.
We will spare you the details.\\
The rest is like any normal {\LaTeX} article.
We will spare you the details.\hfill11\\
The rest is like any normal {\LaTeX} article.
We will spare you the details.\\
The rest is like any normal {\LaTeX} article.
We will spare you the details.

Here starts a new paragraph. The rest is like any normal {\LaTeX} article.
We will spare you the details.
The rest is like any normal {\LaTeX} article.
We will spare you the details.\\
The rest is like any normal {\LaTeX} article.
We will spare you the details.\hfill16\\
The rest is like any normal {\LaTeX} article.
We will spare you the details.\\
The rest is like any normal {\LaTeX} article.
We will spare you the details.\\
The rest is like any normal {\LaTeX} article.
We will spare you the details.\\
The rest is like any normal {\LaTeX} article.
We will spare you the details.\\
The rest is like any normal {\LaTeX} article.
We will spare you the details.\hfill21\\
The rest is like any normal {\LaTeX} article.
We will spare you the details.\\
The rest is like any normal {\LaTeX} article.
We will spare you the details.\\
The rest is like any normal {\LaTeX} article.
We will spare you the details.\\
The rest is like any normal {\LaTeX} article.
We will spare you the details.\\
The rest is like any normal {\LaTeX} article.
We will spare you the details.\hfill26\\
The rest is like any normal {\LaTeX} article.
We will spare you the details.\\
The rest is like any normal {\LaTeX} article.
We will spare you the details.\\
The rest is like any normal {\LaTeX} article.
We will spare you the details.\\
The rest is like any normal {\LaTeX} article.
We will spare you the details.\\
The rest is like any normal {\LaTeX} article.
We will spare you the details.\hfill31\\
The rest is like any normal {\LaTeX} article.
We will spare you the details.\\
The rest is like any normal {\LaTeX} article.
We will spare you the details.\\
The rest is like any normal {\LaTeX} article.
We will spare you the details.\\
The rest is like any normal {\LaTeX} article.
We will spare you the details.\\
The rest is like any normal {\LaTeX} article.
We will spare you the details.\hfill36\\
The rest is like any normal {\LaTeX} article.
We will spare you the details.\\
The rest is like any normal {\LaTeX} article.
We will spare you the details.\\
The rest is like any normal {\LaTeX} article.
We will spare you the details.\\
The rest is like any normal {\LaTeX} article.
We will spare you the details.\\
The rest is like any normal {\LaTeX} article.
We will spare you the details.\hfill41\\
The rest is like any normal {\LaTeX} article.
We will spare you the details.\\
The rest is like any normal {\LaTeX} article.
We will spare you the details.\\
The rest is like any normal {\LaTeX} article.
We will spare you the details.\\
The rest is like any normal {\LaTeX} article.
We will spare you the details.\\
The rest is like any normal {\LaTeX} article.
We will spare you the details.\hfill46\\
The rest is like any normal {\LaTeX} article.
We will spare you the details.
The rest is like any normal {\LaTeX} article.
We will spare you the details.
The rest is like any normal {\LaTeX} article.
We will spare you the details.
The rest is like any normal {\LaTeX} article.
We will spare you the details.
The rest is like any normal {\LaTeX} article.
We will spare you the details.
The rest is like any normal {\LaTeX} article.
We will spare you the details.
The rest is like any normal {\LaTeX} article.
We will spare you the details.
The rest is like any normal {\LaTeX} article.
We will spare you the details.
The rest is like any normal {\LaTeX} article.
We will spare you the details.
The rest is like any normal {\LaTeX} article.
We will spare you the details.
The rest is like any normal {\LaTeX} article.
We will spare you the details.
The rest is like any normal {\LaTeX} article.
We will spare you the details.
The rest is like any normal {\LaTeX} article.
We will spare you the details.
The rest is like any normal {\LaTeX} article.
We will spare you the details.
The rest is like any normal {\LaTeX} article.
We will spare you the details.
The rest is like any normal {\LaTeX} article.
We will spare you the details.
The rest is like any normal {\LaTeX} article.
We will spare you the details.

\section{Prefaces}

Volume editors may create prefaces using this very template,
with {\tt $\backslash$title$\{$Preface$\}$} and {\tt $\backslash$author$\{\}$}.

\section{Bibliography}

We request that you use
\href{http://www.cse.unsw.edu.au/~rvg/EPTCS/eptcs.bst}
{\tt $\backslash$bibliographystyle$\{$eptcs$\}$}
\cite{bibliographystylewebpage}. Compared to the original {\LaTeX}
{\tt $\backslash$biblio\-graphystyle$\{$plain$\}$},
it ignores the field {\tt month}, and uses the extra
bibtex fields {\tt eid}, {\tt doi}, {\tt ee} and {\tt url}.
The first is for electronic identifiers (typically the number $n$
indicating the $n^{\rm th}$ paper in an issue) of papers in electronic
journals that do not use page numbers. The other three are to refer,
with life links, to electronic incarnations of the paper.

Almost all publishers use digital object identifiers (DOIs) as a
persistent way to locate electronic publications. Prefixing the DOI of
any paper with {\tt http://dx.doi.org/} yields a URI that resolves to the
current location (URL) of the response page\footnote{Nowadays, papers
  that are published electronically tend
  to have a \emph{response page} that lists the title, authors and
  abstract of the paper, and links to the actual manifestations of
  the paper (e.g.\ as {\tt dvi}- or {\tt pdf}-file). Sometimes
  publishers charge money to access the paper itself, but the response
  page is always freely available.}
of that paper. When the location of the response page changes (for
instance through a merge of publishers), the DOI of the paper remains
the same and (through an update by the publisher) the corresponding
URI will then resolve to the new location. For that reason a reference
ought to contain the DOI of a paper, with a life link to corresponding
URI, rather than a direct reference or link to the current URL of
publisher's response page. This is the r\^ole of the bibtex field {\tt doi}.
DOIs of papers can often be found through
\url{http://www.crossref.org/guestquery};\footnote{For papers that will appear
  in EPTCS and use \href{http://www.cse.unsw.edu.au/~rvg/EPTCS/eptcs.bst}
  {\tt $\backslash$bibliographystyle$\{$eptcs$\}$} there is no need to
  find DOIs on this website, as EPTCS will look them up for you
  automatically upon submission of a first version of your paper;
  these DOIs can then be incorporated in the final version, together
  with the remaining DOIs that need to found at DBLP or publisher's webpages.}
the second method {\it Search on article title}, only using the {\bf
surname} of the first-listed author, works best.  
Other places to find DOIs are DBLP and the response pages for cited
papers (maintained by their publishers).
{\bf EPTCS requires the inclusion of a DOI in each cited paper, when available.}

Often an official publication is only available against payment, but
as a courtesy to readers that do not wish to pay, the authors also
make the paper available free of charge at a repository such as
\url{arXiv.org}. In such a case it is recommended to also refer and
link to the URL of the response page of the paper in such a
repository.  This can be done using the bibtex fields {\tt ee} or {\tt
url}, which are treated as synonyms.  These fields should not be used
to duplicate information that is already provided through the DOI of
the paper.
You can find archival-quality URL's for most recently published papers
in DBLP---they are in the bibtex-field {\tt ee}. In fact, it is often
useful to check your references against DBLP records anyway, or just find
them there in the first place.

When using {\LaTeX} rather than {\tt pdflatex} to typeset your paper, by
default no linebreaking within long URLs is allowed. This leads often
to very ugly output, that moreover is different from the output
generated when using {\tt pdflatex}. This problem is repaired when
invoking \href{http://www.cse.unsw.edu.au/~rvg/EPTCS/breakurl.sty}
{\tt $\backslash$usepackage$\{$breakurl$\}$}: it allows linebreaking
within links and yield the same output as obtained by default with
{\tt pdflatex}. 
When invoking {\tt pdflatex}, the package {\tt breakurl} is ignored.
\fi 

\nocite{*}
\bibliographystyle{eptcs}
\bibliography{generic}

\begin{thebibliography}{10}
\providecommand{\bibitemdeclare}[2]{}
\providecommand{\surnamestart}{}
\providecommand{\surnameend}{}
\providecommand{\urlprefix}{Available at }
\providecommand{\url}[1]{\texttt{#1}}
\providecommand{\href}[2]{\texttt{#2}}
\providecommand{\urlalt}[2]{\href{#1}{#2}}
\providecommand{\doi}[1]{doi:\urlalt{http://dx.doi.org/#1}{#1}}
\providecommand{\bibinfo}[2]{#2}

\bibitemdeclare{book}{bk:BerstelReutenauer}
\bibitem{bk:BerstelReutenauer}
\bibinfo{author}{J.~\surnamestart Berstel\surnameend} \&
  \bibinfo{author}{C.~\surnamestart Reutenauer\surnameend}
  (\bibinfo{year}{1984}): \emph{\bibinfo{title}{Rational Series and their
  languages}}.
\newblock {\sl \bibinfo{series}{EATCS Monographs on Theoretical Computer
  Science}}~\bibinfo{volume}{12}, \bibinfo{publisher}{Springer}.

\bibitemdeclare{inproceedings}{ar:Bollig-etal2010}
\bibitem{ar:Bollig-etal2010}
\bibinfo{author}{B.~\surnamestart Bollig\surnameend},
  \bibinfo{author}{P.~\surnamestart Gastin\surnameend},
  \bibinfo{author}{B.~\surnamestart Monmege\surnameend} \&
  \bibinfo{author}{M.~\surnamestart Zeitoun\surnameend} (\bibinfo{year}{2010}):
  \emph{\bibinfo{title}{Pebble weighted automata and transitive closure
  logics}}.
\newblock In: {\sl \bibinfo{booktitle}{ICALP'10}}, {\sl
  \bibinfo{series}{Lecture Notes in Computer Science}} \bibinfo{volume}{6199},
  \bibinfo{publisher}{Springer}, pp. \bibinfo{pages}{587--598},
  \doi{10.1007/978-3-642-11301-7}.

\bibitemdeclare{article}{ar:CarlylePaz1971}
\bibitem{ar:CarlylePaz1971}
\bibinfo{author}{J.W. \surnamestart Carlyle\surnameend} \&
  \bibinfo{author}{A.~\surnamestart Paz\surnameend} (\bibinfo{year}{1971}):
  \emph{\bibinfo{title}{Realizations by Stochastic Finite Automata}}.
\newblock {\sl \bibinfo{journal}{J. Comp. Syst. Sc.}} \bibinfo{volume}{5}, pp.
  \bibinfo{pages}{26--40}, \doi{10.1016/S0022-0000(71)80005-3}.

\bibitemdeclare{article}{ar:Cobham1978}
\bibitem{ar:Cobham1978}
\bibinfo{author}{A.~\surnamestart Cobham\surnameend} (\bibinfo{year}{1978}):
  \emph{\bibinfo{title}{Representation of a Word Function as the Sum of Two
  Functions}}.
\newblock {\sl \bibinfo{journal}{Mathematical Systems Theory}}
  \bibinfo{volume}{11}, pp. \bibinfo{pages}{373--377},
  \doi{10.1007/BF01768487}.

\bibitemdeclare{article}{ar:CourcelleMakowskyRoticsTOCS}
\bibitem{ar:CourcelleMakowskyRoticsTOCS}
\bibinfo{author}{B.~\surnamestart Courcelle\surnameend}, \bibinfo{author}{J.A.
  \surnamestart Makowsky\surnameend} \& \bibinfo{author}{U.~\surnamestart
  Rotics\surnameend} (\bibinfo{year}{2000}): \emph{\bibinfo{title}{Linear Time
  Solvable Optimization Problems on Graphs of Bounded Clique-Width}}.
\newblock {\sl \bibinfo{journal}{Theory of Computing Systems}}
  \bibinfo{volume}{33.2}, pp. \bibinfo{pages}{125--150},
  \doi{10.1007/s002249910009}.

\bibitemdeclare{article}{ar:CourcelleMakowskyRoticsDAM}
\bibitem{ar:CourcelleMakowskyRoticsDAM}
\bibinfo{author}{B.~\surnamestart Courcelle\surnameend}, \bibinfo{author}{J.A.
  \surnamestart Makowsky\surnameend} \& \bibinfo{author}{U.~\surnamestart
  Rotics\surnameend} (\bibinfo{year}{2001}): \emph{\bibinfo{title}{On the Fixed
  Parameter Complexity of Graph Enumeration Problems Definable in Monadic
  Second Order Logic}}.
\newblock {\sl \bibinfo{journal}{Discrete Applied Mathematics}}
  \bibinfo{volume}{108}(\bibinfo{number}{1-2}), pp. \bibinfo{pages}{23--52},
  \doi{10.1016/S0166-218X(00)00221-3}.

\bibitemdeclare{inproceedings}{ar:DrosteGastin05}
\bibitem{ar:DrosteGastin05}
\bibinfo{author}{M.~\surnamestart Droste\surnameend} \&
  \bibinfo{author}{P.~\surnamestart Gastin\surnameend} (\bibinfo{year}{2005}):
  \emph{\bibinfo{title}{Weighted Automata and Weighted Logics}}.
\newblock In: {\sl \bibinfo{booktitle}{ICALP 2005}}, pp.
  \bibinfo{pages}{513--525}, \doi{10.1007/11523468\_42}.

\bibitemdeclare{article}{ar:DrosteGastin2007}
\bibitem{ar:DrosteGastin2007}
\bibinfo{author}{M.~\surnamestart Droste\surnameend} \&
  \bibinfo{author}{P.~\surnamestart Gastin\surnameend} (\bibinfo{year}{2007}):
  \emph{\bibinfo{title}{Weighted automata and weighted logics}}.
\newblock {\sl \bibinfo{journal}{Theor. Comput. Sci.}}
  \bibinfo{volume}{380}(\bibinfo{number}{1-2}), pp. \bibinfo{pages}{69--86},
  \doi{10.1016/j.tcs.2007.02.055}.

\bibitemdeclare{book}{bk:HandbookWeightedAutomata}
\bibitem{bk:HandbookWeightedAutomata}
\bibinfo{editor}{M.~\surnamestart Droste\surnameend},
  \bibinfo{editor}{W.~\surnamestart Kuich\surnameend} \&
  \bibinfo{editor}{H.~\surnamestart Vogler\surnameend}, editors
  (\bibinfo{year}{2009}): \emph{\bibinfo{title}{Handbook of Weighted
  Automata}}.
\newblock \bibinfo{series}{EATCS Monographs on Theoretical Computer Science},
  \bibinfo{publisher}{Springer}.

\bibitemdeclare{article}{ar:DrosteVogler2006}
\bibitem{ar:DrosteVogler2006}
\bibinfo{author}{M.~\surnamestart Droste\surnameend} \&
  \bibinfo{author}{H.~\surnamestart Vogler\surnameend} (\bibinfo{year}{2006}):
  \emph{\bibinfo{title}{Weighted tree automata and weighted logics}}.
\newblock {\sl \bibinfo{journal}{Theor. Comput. Sci.}} \bibinfo{volume}{366},
  pp. \bibinfo{pages}{228--247}, \doi{10.1016/j.tcs.2006.08.025}.

\bibitemdeclare{article}{ar:DrosteKuich2013}
\bibitem{ar:DrosteKuich2013}
\bibinfo{author}{Manfred \surnamestart Droste\surnameend} \&
  \bibinfo{author}{Werner \surnamestart Kuich\surnameend}
  (\bibinfo{year}{2013}): \emph{\bibinfo{title}{Weighted finite automata over
  semirings}}.
\newblock {\sl \bibinfo{journal}{Theor. Comput. Sci.}} \bibinfo{volume}{485},
  pp. \bibinfo{pages}{38--48}, \doi{10.1016/j.tcs.2013.02.028}.

\bibitemdeclare{book}{bk:EF95}
\bibitem{bk:EF95}
\bibinfo{author}{H.-D. \surnamestart Ebbinghaus\surnameend} \&
  \bibinfo{author}{J.~\surnamestart Flum\surnameend} (\bibinfo{year}{1995}):
  \emph{\bibinfo{title}{Finite Model Theory}}.
\newblock \bibinfo{series}{Perspectives in Mathematical Logic},
  \bibinfo{publisher}{Springer}, \doi{10.1007/978-3-662-03182-7}.

\bibitemdeclare{article}{ar:Fliess1974}
\bibitem{ar:Fliess1974}
\bibinfo{author}{M.~\surnamestart Fliess\surnameend} (\bibinfo{year}{1974}):
  \emph{\bibinfo{title}{Matrices de {H}ankel}}.
\newblock {\sl \bibinfo{journal}{J Maths Pures Appl}} \bibinfo{volume}{53}, pp.
  \bibinfo{pages}{197--222}.
\newblock \bibinfo{note}{Erratum in volume 54}.

\bibitemdeclare{inproceedings}{ar:GodlinKotekMakowsky08}
\bibitem{ar:GodlinKotekMakowsky08}
\bibinfo{author}{B.~\surnamestart Godlin\surnameend},
  \bibinfo{author}{T.~\surnamestart Kotek\surnameend} \& \bibinfo{author}{J.A.
  \surnamestart Makowsky\surnameend} (\bibinfo{year}{2008}):
  \emph{\bibinfo{title}{Evaluation of graph polynomials}}.
\newblock In: {\sl \bibinfo{booktitle}{34th International Workshop on
  Graph-Theoretic Concepts in Computer Science, WG08}}, {\sl
  \bibinfo{series}{Lecture Notes in Computer Science}} \bibinfo{volume}{5344},
  pp. \bibinfo{pages}{183--194}, \doi{10.1007/978-3-540-92248-3\_17}.

\bibitemdeclare{article}{ar:GraedelGurevich}
\bibitem{ar:GraedelGurevich}
\bibinfo{author}{E.~\surnamestart Gr{\"a}del\surnameend} \&
  \bibinfo{author}{Y.~\surnamestart Gurevich\surnameend}
  (\bibinfo{year}{1998}): \emph{\bibinfo{title}{Metafinite Model Theory}}.
\newblock {\sl \bibinfo{journal}{Information and Computation}}
  \bibinfo{volume}{140}, pp. \bibinfo{pages}{26--81},
  \doi{10.1006/inco.1997.2675}.

\bibitemdeclare{book}{bk:HU}
\bibitem{bk:HU}
\bibinfo{author}{J.~E. \surnamestart Hopcroft\surnameend} \&
  \bibinfo{author}{J.~D. \surnamestart Ullman\surnameend}
  (\bibinfo{year}{1980}): \emph{\bibinfo{title}{Introduction to Automata
  Theory, Languages and Computation}}.
\newblock \bibinfo{series}{Addison-Wesley Series in Computer Science},
  \bibinfo{publisher}{Addison-Wesley}.

\bibitemdeclare{phdthesis}{phd:Jacob}
\bibitem{phd:Jacob}
\bibinfo{author}{G.~\surnamestart Jacob\surnameend} (\bibinfo{year}{1975}):
  \emph{\bibinfo{title}{Repr{\'e}sentations et substitutions matricielles dans
  la th{\'e}orie alg{\'e}brique des transductions}}.
\newblock Ph.D. thesis, \bibinfo{school}{Universit{\'e} de Paris, {VII}}.

\bibitemdeclare{phdthesis}{phd:Kotek}
\bibitem{phd:Kotek}
\bibinfo{author}{T.~\surnamestart Kotek\surnameend} (\bibinfo{year}{March
  2012}): \emph{\bibinfo{title}{Definability of combinatorial functions}}.
\newblock Ph.D. thesis, \bibinfo{school}{Technion - Israel Institute of
  Technology, Haifa, Israel}.
\newblock \bibinfo{note}{Submitted}.

\bibitemdeclare{inproceedings}{pr:KotekMakowsky2012}
\bibitem{pr:KotekMakowsky2012}
\bibinfo{author}{T.~\surnamestart Kotek\surnameend} \& \bibinfo{author}{J.A.
  \surnamestart Makowsky\surnameend} (\bibinfo{year}{2012}):
  \emph{\bibinfo{title}{Connection Matrices and the Definability of Graph
  Parameters}}.
\newblock In: {\sl \bibinfo{booktitle}{CSL 2012}}, pp.
  \bibinfo{pages}{411--425}, \doi{10.4230/LIPIcs.CSL.2012.411}.

\bibitemdeclare{inproceedings}{ar:KotekMakowskyZilber08}
\bibitem{ar:KotekMakowskyZilber08}
\bibinfo{author}{T.~\surnamestart Kotek\surnameend}, \bibinfo{author}{J.A.
  \surnamestart Makowsky\surnameend} \& \bibinfo{author}{B.~\surnamestart
  Zilber\surnameend} (\bibinfo{year}{2008}): \emph{\bibinfo{title}{On Counting
  Generalized Colorings}}.
\newblock In: {\sl \bibinfo{booktitle}{Computer Science Logic, CSL'08}}, {\sl
  \bibinfo{series}{Lecture Notes in Computer Science}} \bibinfo{volume}{5213},
  pp. \bibinfo{pages}{339–--353}, \doi{10.1007/978-3-540-87531-4\_25}.

\bibitemdeclare{incollection}{ar:KotekMakowskyZilber11}
\bibitem{ar:KotekMakowskyZilber11}
\bibinfo{author}{T.~\surnamestart Kotek\surnameend}, \bibinfo{author}{J.A.
  \surnamestart Makowsky\surnameend} \& \bibinfo{author}{B.~\surnamestart
  Zilber\surnameend} (\bibinfo{year}{2011}): \emph{\bibinfo{title}{On Counting
  Generalized Colorings}}.
\newblock In \bibinfo{editor}{M.~\surnamestart Grohe\surnameend} \&
  \bibinfo{editor}{J.A. \surnamestart Makowsky\surnameend}, editors: {\sl
  \bibinfo{booktitle}{Model Theoretic Methods in Finite Combinatorics}}, {\sl
  \bibinfo{series}{Contemporary Mathematics}} \bibinfo{volume}{558},
  \bibinfo{publisher}{American Mathematical Society}, pp.
  \bibinfo{pages}{207--242}, \doi{10.1090/conm/558/11052}.

\bibitemdeclare{article}{ar:MakowskyTARSKI}
\bibitem{ar:MakowskyTARSKI}
\bibinfo{author}{J.A. \surnamestart Makowsky\surnameend}
  (\bibinfo{year}{2004}): \emph{\bibinfo{title}{Algorithmic uses of the
  {F}eferman-{V}aught theorem}}.
\newblock {\sl \bibinfo{journal}{Annals of Pure and Applied Logic}}
  \bibinfo{volume}{126.1-3}, pp. \bibinfo{pages}{159--213},
  \doi{10.1016/j.apal.2003.11.002}.

\bibitemdeclare{article}{ar:Makowsky2005}
\bibitem{ar:Makowsky2005}
\bibinfo{author}{J.A. \surnamestart Makowsky\surnameend}
  (\bibinfo{year}{2005}): \emph{\bibinfo{title}{Coloured Tutte polynomials and
  Kauffman brackets for graphs of bounded tree width}}.
\newblock {\sl \bibinfo{journal}{Discrete Applied Mathematics}}
  \bibinfo{volume}{145}(\bibinfo{number}{2}), pp. \bibinfo{pages}{276--290},
  \doi{10.1016/j.dam.2004.01.016}.

\bibitemdeclare{article}{ar:MakowskyZoo}
\bibitem{ar:MakowskyZoo}
\bibinfo{author}{J.A. \surnamestart Makowsky\surnameend}
  (\bibinfo{year}{2008}): \emph{\bibinfo{title}{From a Zoo to a Zoology:
  Towards a general theory of graph polynomials}}.
\newblock {\sl \bibinfo{journal}{Theory of Computing Systems}}
  \bibinfo{volume}{43}, pp. \bibinfo{pages}{542--562},
  \doi{10.1007/s00224-007-9022-9}.

\bibitemdeclare{article}{ar:Skolem1962}
\bibitem{ar:Skolem1962}
\bibinfo{author}{Th. \surnamestart Skolem\surnameend} (\bibinfo{year}{1962}):
  \emph{\bibinfo{title}{Proof of some theorems on recursively enumerable
  sets}}.
\newblock {\sl \bibinfo{journal}{Notre Dame Journal of Formal Logic}}
  \bibinfo{volume}{3.2}, pp. \bibinfo{pages}{65--74},
  \doi{10.1305/ndjfl/1093957149}.

\bibitemdeclare{article}{ar:Volkov2010}
\bibitem{ar:Volkov2010}
\bibinfo{author}{S.A. \surnamestart Volkov\surnameend} (\bibinfo{year}{2010}):
  \emph{\bibinfo{title}{On a class of {S}kolem elementary functions}}.
\newblock {\sl \bibinfo{journal}{Journal of Applied and Industrial
  Mathematics}} \bibinfo{volume}{4.4}, pp. \bibinfo{pages}{588--599},
  \doi{10.1134/S1990478910040149}.

\end{thebibliography}
\end{document}